\let\oldmarginpar\marginpar
\renewcommand\marginpar[1]{\-\oldmarginpar[\raggedleft\footnotesize #1]%
{\raggedright\footnotesize #1}}
\renewcommand{\l}{\ell}
\newcommand{\nat}{{\mathbb{N}}}
\newcommand{\rst}{\!\!\upharpoonright\!}
\newcommand{\set}[1]{\left\{#1\right\}}
\newcommand{\F}{\mathbb F}
\newcommand {\cd}{\cdot}
\newcommand {\zo}{\set{0,1}}
\newcommand {\CommaDots} {,\ldots,}
\newcommand {\ML}[1] {\mathbf{M}\!\left[#1\right]} 
\newcommand {\sm} {\setminus}
\newcommand{\Base}{\mbox{}\\ \ind{\textit{Base case: }}}
\newcommand{\Induction}{\mbox{}\\ \ind{\textit{Induction step: }}}
\newcommand{\case}[1]{\ind\textbf{Case #1}:\,}
\newcommand {\ind} {\noindent}
\newcommand {\bigs} {\bigskip}
\DeclareMathAlphabet{\mathitbf}{OML}{cmm}{b}{it}
\font\sf=cmss10
\newcommand{\Nats}{{\hbox{\sf I\kern-.13em\hbox{N}}}}   
\newcommand{\Reals}{{\hbox{\sf I\kern-.14em\hbox{R}}}}  
\newcommand{\Ints}{{\hbox{\sf Z\kern-.43emZ}}}          
\newcommand{\CC}{{\hbox{\sf C\kern -.48emC}}}           
\newcommand{\QQ}{{\hbox{\sf C\kern -.48emQ}}}           
\newcommand{\such}{\;|\;}
\renewcommand{\And}{\land}
\newcommand{\Or}{\lor}
\newcommand{\Not}{\neg}
\newcommand{\BigOr}{\bigvee}
\newtheorem{theorem}{Theorem}[section]
\newtheorem{lemma}[theorem]{Lemma}
\newtheorem{proposition}[theorem]{Proposition}
\newtheorem{corollary}[theorem]{Corollary}
\newtheorem{definition}{Definition}[section]
\newtheorem{claim}[theorem]{Claim}
\newenvironment{notation}{\QuadSpace\par\noindent{\bf Notation}:}{\HalfSpace}
\newenvironment{note}{\QuadSpace\par\noindent{\bf Note}:}{\HalfSpace}
\newenvironment{notes}{\QuadSpace\par\noindent{\bf Notes}:}{\HalfSpace}
\newenvironment{importantnote}{\QuadSpace\par\noindent{\bf Important note}:}{\HalfSpace}
\newenvironment{example}{\QuadSpace\par\noindent{\bf Example}:}{\HalfSpace}
\newenvironment{remark}{\HalfSpace\par\noindent{\bf Remark}:}{\HalfSpace}
\newenvironment{proof}{\QuadSpace\par\noindent{\bf Proof}:}{\EndProof\HalfSpace}
\newenvironment{proofsketch}{\QuadSpace\par\noindent{\textit{Proof sketch}}:}{\EndProof\HalfSpace}
\newenvironment{proofclaim}{\QuadSpace\par\noindent{\bf Proof of claim}:}
{\vrule width 1ex height 1ex depth 0pt $_{\textrm{\,Claim}}$ \HalfSpace}
\newcommand{\QuadSpace}{\vspace{0.25\baselineskip}}
\newcommand{\HalfSpace}{\vspace{0.5\baselineskip}}
\newcommand{\EndProof}{ \hfill \vrule width 1ex height 1ex depth 0pt }
\def\RL0{{\mbox{\rm R(lin)}}}
\def\RZ0{{\mbox{\rm R$^0$(lin)}}}
\def\RC0{R(lin) with constant coefficients}
\def\RCD0#1{{\mbox{\rm R$_{#1}$(lin)}}}
\def\Tse0{{\mbox{$\neg$\textsc{Tseitin}$_{G,p}$}}}
\definecolor{bluetxt}{rgb}{0,0,.5}
\definecolor{myred}{rgb}{0.6,0.0,0.1}
\definecolor{greentxt}{rgb}{0,.5,0}
\definecolor{redtxt}{rgb}{0.1,0.1,0.65}
\definecolor{purpletxt}{rgb}{0.6,0.1,0.7}
\definecolor{black}{rgb}{.0,.0,.0}
\definecolor{verydarkblue}{rgb}{.0,.0,.2}
\definecolor{lightgray}{rgb}{.7,.7,.7}
\newenvironment{proof}{

\smallskip
\noindent\emph{Proof.}}{\hfill\(\Box\)
\bigskip
} \fi
\newlength{\defbaselineskip}
\newcommand{\doublespacing}{\setlength{\baselineskip}{1.0\defbaselineskip}}
\def\ncp0{{\rm NFPC}}
\newcommand{\fop}{{\rm OFPC}}
\title
[Algebraic Proofs over Noncommutative Formulas]{Algebraic Proofs over Noncommutative Formulas}
\author[Iddo Tzameret]{Iddo Tzameret\,$^*$}
\thanks{$^*$Mathematical Institute, Academy of Sciences of the Czech Republic,
\v{Z}itn\'{a} 25, 115 67 Prague 1, Czech Republic. Email: \texttt{tzameret@math.cas.cz}\,.
Supported by The Eduard \v{C}ech Center for Algebra and Geometry and The John Templeton Foundation.}
\date{July 2010}
\address{
    Mathematical Institute, Academy of Sciences
    of the Czech Republic,
\v{Z}itn\'{a} 25, 115 67 Praha 1, Czech Republic.
}
\email{tzameret@math.cas.cz}
\newenvironment{proof_of_example}{\QuadSpace\par\noindent{\textit{Proof of example}}:}{\EndProof\HalfSpace}
\begin{document}
\keywords{Proof complexity, algebraic proof systems, Frege proofs, lower bounds, noncommutative formulas, polynomial calculus}
\maketitle
\doublespacing
\begin{abstract}
We study possible formulations of algebraic propositional proof systems operating with noncommutative formulas.
We observe that a simple formulation gives rise to systems at least as strong as Frege---yielding a semantic way to define a Cook-Reckhow (i.e., polynomially verifiable) algebraic analog of Frege proofs, different from that given in \cite{BIKPRS96,GH03}.
We then turn to an apparently weaker system, namely,
polynomial calculus (PC) where polynomials are written as ordered
formulas (\emph{PC over ordered formulas}, for short):
an ordered polynomial is a noncommutative polynomial in which the order of products in
every monomial respects a fixed linear order on variables;
an algebraic formula is \emph{ordered} if the polynomial computed by each of its subformulas is ordered.
We show that PC over ordered formulas is strictly stronger than resolution, polynomial calculus and polynomial calculus with resolution (PCR) and admits polynomial-size refutations for the pigeonhole principle and the Tseitin's formulas.
We conclude by proposing an approach for establishing lower bounds on PC over ordered formulas proofs, and related systems, based on properties of lower bounds on noncommutative formulas.

The motivation behind this work is developing techniques incorporating rank arguments (similar to those used in algebraic circuit complexity) for establishing lower bounds on propositional proofs.
\end{abstract}

\tableofcontents

\section{Introduction}
This work investigates algebraic proof systems establishing propositional tautologies, in which proof lines are written as noncommutative algebraic formulas (noncommutative formulas, for short).
Research into the complexity of algebraic propositional proofs is a central line in proof complexity (cf. \cite{Pit97,Tza08:PhD} for general expositions).
Another prominent line of research is that dedicated to connections between circuit classes and the propositional proofs based on these classes.
In particular, considerable efforts were made to borrow techniques used for lower bounding certain circuit classes, and utilize them to show lower bounds on \emph{proofs} operating with circuits from the given classes.
For example, bounded depth Frege proofs can be viewed as propositional logic operating with $ \textsf{AC}^0 $ circuits, and lower bounds on bounded depth Frege proofs use techniques borrowed from $ \textsf{AC}^0 $ circuits lower bounds (cf. \cite{Ajt88,KPW95,PBI93}).
Pudl{\'a}k et al. \cite{Pud99,AGP01} studied proofs based on monotone circuits---motivated by known exponential lower bounds on monotone circuits.
Raz and the author \cite{RT06,RT07,Tza08:PhD} investigated algebraic proof systems operating with multilinear formulas---motivated by lower bounds on multilinear formulas for the determinant, permanent and other explicit polynomials \cite{Raz04a,Raz04b}.
Atserias et al. \cite{AKV04}, Kraj\'{i}\v{c}ek \cite{Kra07} and Segerlind \cite{Seg07} have considered proofs operating with ordered binary decision diagrams (OBDDs).

The current work is a contribution to this line of research, where the circuit class is noncommutative formulas.
The motivation behind this work is the hope that certain rank arguments, found successful in lower bounding the size of certain algebraic circuits, might facilitate also in establishing lower bounds for the corresponding algebraic proofs.
For this purpose, the choice of noncommutative formulas is natural, since such formulas
constitute a fairly weak circuit class, and the proof of exponential-size lower bounds on noncommutative formulas, given by Nisan \cite{Nis91}, uses an especially transparent rank argument.

We will show that for certain formulations of propositional proof systems over noncommutative formulas demonstrating
lower bounds is likely to be hard,
as the systems we get are considerably strong, and specifically, at least as strong as Frege proofs.
On the other hand, by using a fairly restricted formulation of proofs operating with noncommutative formulas,
we obtain a system that we show is strictly stronger than known algebraic proof systems (like the polynomial calculus).
For this apparently weaker system, demonstrating lower bounds seems not to be outside the reach of current techniques.
In particular, we propose to study the complexity of these proofs by measuring the maximal \emph{rank} of a polynomial appearing in a proof, instead
of the maximal degree (the latter is done in the polynomial calculus).
It is known that the rank of a noncommutative polynomial (as defined for instance by Nisan \cite{Nis91}) is proportional to the minimal size of a noncommutative formula computing the polynomial.
We argue for the usefulness of measuring the maximal rank of a polynomial in algebraic proofs,
by demonstrating a certain property of ranks of ``ordered polynomials'' (as defined formally),
and relating it to proof complexity lower bounds (via an example of a conditional lower bound).
\QuadSpace

\subsection{Results and related works}\label{sec:results}
We concentrate on algebraic proofs establishing propositional contradictions where polynomials are written as noncommutative formulas.
We deal with two kinds of proof systems---both are variants (and extensions) of the polynomial calculus (PC) introduced in \cite{CEI96}.
In PC we start from a set of initial polynomials from $ \F[x_1,\ldots,x_n] $, the ring of polynomials with coefficients from $ \F $ (the intended semantics of a proof-line $ p $ is the equation $ p=0 $ over $ \F $). We derive new proof-lines
by using two basic algebraic inference rules:
from two polynomials $ p$ and $q$, we can deduce $\alpha\cdot
p+\beta\cdot q$, where $\alpha,\beta$ are elements of $\F$; and from $p$ we
can deduce $x_i\cdot p$, for a variable $x_i$ ($i=1,\ldots,n$).
We also have Boolean axioms $x_i^2-x_i=0$, for all $ i=1,\ldots,n $, expressing that
the variables get the values $ 0 $ or $ 1 $.
Our two proof systems extend PC as follows:
\begin{enumerate}
\item PC over noncommutative formulas: NFPC. This proof system operates with noncommutative polynomials over a field, written as (arbitrarily chosen)\footnotemark~noncommutative formulas. The rules of addition and multiplication are similar to PC, except that multiplication is done \emph{either from left or right.}
    We also add a a Boolean axiom $ x_i x_j -x_j x_i $ that expresses the fact that for $ 0,1 $ values to the variables, multiplication is in fact commutative.

\item PC over ordered formulas: OFPC. This proof system is PC operating with ordered polynomials written as (arbitrarily chosen) ordered formulas.
 An ordered polynomial is a noncommutative polynomial such that the order of products in all monomials respects a fixed linear order on the variables, and an ordered formula is a noncommutative formula in which every subformula computes an ordered polynomial.
\end{enumerate}
\footnotetext{This means that if a proof-line consists of the polynomial $ p $, then one may choose to write \emph{any} formula that computes $ p $. (These kind of systems are sometimes called ``semantic'' proof systems.)}

Both proof systems are shown to be Cook-Reckhow systems (that is, \emph{polynomial verifiable}, sound and complete proof systems for propositional tautologies).\QuadSpace

\textbf{(1)} The first proof system NFPC is shown to polynomially simulate Frege (this is partly because of the choice of Boolean axioms).
This gives a semantic definition of a Cook-Reckhow proof system operating with algebraic formulas, simpler in some way from that proposed by Grigoriev and Hirsch \cite{GH03}: the paper \cite{GH03} aims at formulating a formal propositional proof system for establishing propositional tautologies (that is, a Cook-Reckhow proof system), which is an algebraic analog of the Frege proof system.
In order to make their system polynomially-verifiable, the authors augment it with a set of auxiliary rewriting rules, intended to derive algebraic formulas from previous algebraic formulas via the polynomial-ring axioms (that is, associativity, commutativity, distributivity and the zero and unit elements rules). In this framework algebraic formulas are treated as syntactic terms, and one must explicitly apply the polynomial-ring rewrite rules to derive one formula from another.
Our proof system \ncp0 is simpler in the sense that we get a similar proof system to that in \cite{GH03},
while adding no rewriting rules (both our proof system and that in \cite{GH03} can simulate Frege and both are polynomially verifiable and
operate with algebraic formulas, or in our case with noncommutative formulas).
The idea is that because we use noncommutative formulas as proof-lines, to verify that a lines was derived correctly from
previous lines we can use the deterministic polynomial identity testing algorithm for noncommutative formulas devised by Raz and Shpilka \cite{RS04}
(and so we do not need any rewriting rules).
\QuadSpace

\textbf{(2)} For the second proof system, \fop, we show that, despite its apparent weakness, it is stronger than Polynomial Calculus with Resolution (PCR; and hence it is also stronger than both PC and resolution), and also can polynomially simulate a proof system operating with restricted forms of disjunctions of linear equalities called \RZ0 (introduced in \cite{RT07}).
The latter implies polynomial-size refutations for the pigeonhole principle and the Tseitin graph formulas, due to corresponding upper bounds demonstrated in \cite{RT07}.

We then propose a simple lower bound approach for \fop, based on properties of products of ordered formulas (these properties are proved in a similar manner to Nisan's lower bound on noncommutative formulas, by lower bounding the rank of matrices associated with noncommutative polynomials). We show certain sufficient conditions yielding super-polynomial lower bounds on \fop~proofs.

\begin{note}
All the results in this paper hold when one considers \emph{algebraic branching programs} (ABPs) instead of noncommutative formulas,
and \emph{ordered}-ABPs instead of ordered-formulas.
For the precise definition of ABP see e.g., \cite{Nis91}.
An \emph{ordered}-ABP is an ABP such that the order of variables appearing on the edges of
every path from source to sink on the ABP graph, respects a fixed linear order on the variables (see \cite{JQS10} for a close model called
$ \pi $-ordered ABP).
\end{note}

\HalfSpace

\ind\textbf{Related work.}
There is some resemblance between noncommutative formulas (and in fact, algebraic branching programs) and ordered binary decision diagrams (OBDDs) (e.g., close techniques were used to obtain polynomial identity testing algorithms for noncommutative formulas \cite{RS04} and for OBDDs \cite{Waa97}).
Thus, proofs operating with noncommutative formulas are reminiscent to the OBDD-based proof systems introduced in \cite{AKV04,Kra07,Seg07}. Nevertheless, one difference between OBDD-based proofs and noncommutative formulas-based proofs is that the feasible monotone interpolation lower bound technique is applicable in the case of OBDD-based systems, while this technique does not known to lead to super-polynomial size lower bounds even on PC proofs (and thus, also on \fop~proofs which are shown to polynomially simulate PC proofs).

Another proof system, that is even closer to \fop, is that operating with \emph{multilinear formulas} introduced in \cite{RT06} (under the name fMC). The upper bounds on \fop~proofs are similar to that shown for multilinear proofs in \cite{RT06}. Moreover, the technique used by Raz to establish super-polynomial lower bounds on multilinear formulas in \cite{Raz04a} is close---though more involved---to that used by Nisan in the lower bound proof for noncommutative formulas \cite{Nis91}. Therefore, proving lower bounds on \fop~proofs might help in
establishing lower bounds on multilinear proofs as well.

\section{Preliminaries}
For a natural number we let $[n]=\set{1,\ldots,n}$.

\subsection{Noncommutative polynomials and formulas}
Let $ \F $ be a field. Denote by $ \F[x_1,\ldots,x_n] $ the ring of (commutative) polynomials with coefficients from $ \F $ and variables $ x_1,\ldots,x_n $. We denote by $ \F\langle x_1,\ldots,x_n \rangle $ the \emph{noncommutative} ring of polynomials with coefficients from $ \F $ and variables $ x_1,\ldots,x_n $.
In other words, $ \F\langle x_1,\ldots,x_n\rangle $ is the ring of polynomials (where a polynomial is a formal sum of products of variables and field elements) conforming to all the polynomial-ring axioms excluding the commutativity of multiplication axiom. For instance, if $ x_i,x_j $ are two different variables, then $ x_i\cd x_j $ and $ x_j \cd x_i $ are two different polynomials in $\F\langle x_1,\ldots,x_n\rangle $ (note that variables do commute with field elements).

We say that $ \mathcal A $ is an \emph{algebra over $ \F $}, or an \emph{$\F $-algebra}, if $ \mathcal A $ is a vector space over $\F$  together with
a distributive multiplication operation; where multiplication
in $ \mathcal A $  is associative (but it need not be commutative) and there exists a multiplicative unity in $\mathcal A $.

A noncommutative formula is just a (commutative) arithmetic formula, except that we take care for the order in which products are done:

\begin{definition}[Noncommutative formula]\label{def:nonc_formula}
Let $ \F $ be a field and $ x_1,x_2,\ldots $ be variables.
A noncommutative algebraic formula is a labeled tree, with edges directed from the leaves to the root, and with fan-in at most two,
such that there is an order on the edges coming into a node (the first edge is called the \emph{left} edge and the second one the \emph{right} edge). Every leaf of the tree (namely, a node of fan-in zero) is labeled either with an input variable $ x_i $ or a field $ \F $ element.
Every other node of the tree is labeled either with $ +$ or $\times $ (in the first case the node is a plus gate and in the second case a product gate). We assume that there is only one node of out-degree zero, called \emph{the root}.
An algebraic formula \emph{computes} a noncommutative polynomial in the ring of noncommutative polynomials $ \F\langle x_1,\ldots,x_n\rangle  $ in the following way.
A leaf computes the input variable or field element that labels it.
A plus gate computes the sum of polynomials computed by its incoming nodes.
A product gate computes the \emph{noncommutative} product of the polynomials computed by its incoming nodes according to the order of the edges. (Subtraction is obtained using the constant $ -1$.) The output of the formula is the polynomial computed by the root. The depth of a formula is the maximal length of a path from the root to the leaf.
\end{definition}

The \textbf{size} of an algebraic formula (and noncommutative formula) $ f $ is the total number of nodes in its underlying tree, and is denoted $ |f|$.

Raz and Shpilka \cite{RS04} showed that there is a deterministic polynomial identity testing (PIT) algorithm that decides whether two noncommutative formulas compute the same noncommutative polynomial:

\begin{theorem}[PIT for noncommutative formulas \cite{RS04}]\label{thm:RS04-PIT}
There is a deterministic polynomial-time algorithm that decides whether a given noncommutative formula over a field $ \F $ computes the zero polynomial $ 0 $.\footnote{We assume here that the field  $ \F $  can be efficiently represented (e.g., the field of the rationals).}
\end{theorem}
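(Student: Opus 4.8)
The plan is to reconstruct the Raz--Shpilka algorithm, which has two ingredients: a reduction from noncommutative formulas to homogeneous noncommutative algebraic branching programs (ABPs) of polynomial size, and a layer-by-layer linear-algebra procedure on such an ABP that decides identity-to-zero without ever writing down a coefficient vector explicitly (these have exponential length). For the reduction, a noncommutative formula of size $s$ computes a polynomial of degree at most $s$, and by standard transformations---balance the formula to depth $\BigO(\log s)$ and then simulate it by a bounded-width branching program, i.e. by a product of $\mathrm{poly}(s)$ bounded-size matrices whose entries are affine forms in the $x_i$---one gets in polynomial time a noncommutative ABP of $\mathrm{poly}(s)$ size for the same polynomial; homogenizing an ABP (in contrast to homogenizing a formula) costs only a further polynomial factor, so it is enough to decide in deterministic polynomial time whether a homogeneous noncommutative ABP computes $0$. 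Fix such an ABP: layers $V_0,\ldots,V_d$ with $d$ and all $|V_i|$ at most $\mathrm{poly}(s)$, a single source $s_0\in V_0$ and single sink $t\in V_d$, every edge from $V_i$ to $V_{i+1}$ labeled by a homogeneous linear form in $x_1,\ldots,x_n$; the computed polynomial is $f=f_t$, where for $v\in V_i$ we let $f_v$ be the sum over all $s_0$-to-$v$ paths of the ordered product of the edge labels, a homogeneous polynomial of degree $i$.

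The coefficient vectors of $\{f_v : v\in V_i\}$ span a subspace $W_i$ of the $n^i$-dimensional space $\F^{[n]^i}$, with $\dim W_i\le|V_i|$. The heart of the algorithm is to maintain, by induction on $i$, a compact \emph{rank certificate}: a set $B_i$ of at most $|V_i|$ degree-$i$ monomials such that the coordinate projection $\pi_{B_i}:\F^{[n]^i}\to\F^{B_i}$ is injective on $W_i$, together with the projected vectors $\pi_{B_i}(f_v)$ for every $v\in V_i$. Initially $B_0$ consists of the empty monomial and $\pi_{B_0}(f_{s_0})=1$. In the step from $i$ to $i+1$, since $f_w=\sum_{v\to w}f_v\cdot\ell_{vw}$ with $\ell_{vw}=\sum_k c_k^{vw}x_k$, the coefficient of a monomial $mx_j$ in $f_w$ is $\sum_{v\to w}c_j^{vw}\cdot(\text{coefficient of }m\text{ in }f_v)$---a fixed linear combination of the coefficients of $m$ in the $f_v$, with the same scalars for every $m$---so from the stored $\pi_{B_i}(f_v)$ we can compute $\pi_C(f_w)$ for every $w\in V_{i+1}$, where $C=\{\,mx_j : m\in B_i,\ j\in[n]\,\}$ has size at most $n\,|V_i|$. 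A single Gaussian elimination over $\F$ then selects $B_{i+1}\subseteq C$ with $|B_{i+1}|\le|V_{i+1}|$ on which $\pi_{B_{i+1}}$ is injective over the span of $\{\pi_C(f_w):w\in V_{i+1}\}$.

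The crux is that $\pi_C$ already loses no rank on $W_{i+1}$, so that restricting to the polynomially many monomials in $C$ (and thence to $B_{i+1}$) is harmless. Indeed, suppose $\sum_w\alpha_w\pi_C(f_w)=0$. Fixing $j\in[n]$ and writing $d_{v,j}:=\sum_w\alpha_w c_j^{vw}$ (the $x_j$-coefficient of $\sum_w\alpha_w\ell_{vw}$), the hypothesis says $\sum_v d_{v,j}\,\pi_{B_i}(f_v)=0$; by the inductive injectivity of $\pi_{B_i}$ on $W_i$ this forces $\sum_v d_{v,j}f_v=0$ as a polynomial, so the coefficient of $mx_j$ in $\sum_w\alpha_w f_w$ is zero for every degree-$i$ monomial $m$. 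As $j$ was arbitrary and every degree-$(i+1)$ monomial is of the form $mx_j$, we conclude $\sum_w\alpha_w f_w=0$ identically---which is exactly the assertion that $\pi_C$ preserves rank. Carrying the induction to layer $d$, we have $f=f_t\in W_d$ and $\pi_{B_d}$ injective on $W_d$, so $f\equiv0$ if and only if $\pi_{B_d}(f_t)=0$, a test we have already run. Each of the $d\le\mathrm{poly}(s)$ rounds performs $\mathrm{poly}(s,n)$ field operations with $n\le s$, giving a deterministic polynomial-time algorithm over any efficiently representable field, as required.

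I expect the rank-preservation lemma of the third paragraph---that appending a single variable on the right to the monomials of $B_i$ suffices to recapture the full rank at the next layer---to be the genuinely delicate point, since it is what licenses replacing the exponential-dimensional ambient space by a polynomial-size certificate; it is a constructive, linear-algebraic counterpart of Nisan's lower-bound argument, which bounds the minimal ABP width by the rank of the partial-coefficient matrices. The remaining care is in the formula-to-ABP reduction and the homogenization step: one should recall that it is the homogenization of a \emph{formula}, not of an ABP, that can cost a super-polynomial blow-up, which is why the reduction passes through balanced formulas (or bounded-width programs) rather than being applied to the formula directly.
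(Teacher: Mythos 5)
The paper does not prove this theorem at all---it imports it from Raz and Shpilka \cite{RS04}---and your reconstruction is essentially their argument: reduce the formula to a homogeneous noncommutative ABP and then maintain, layer by layer, a polynomial-size set of certificate monomials on which the coefficient projection remains injective, with the key rank-preservation lemma (appending a single right variable to the monomials of $B_i$ suffices) proved correctly. One minor simplification: the detour through Brent balancing and bounded-width programs is unnecessary, since a noncommutative formula converts directly to a noncommutative ABP by series--parallel induction on its gates, after which homogenization and testing each homogeneous component separately finish the job.
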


\subsection{Polynomial Calculus}
Algebraic propositional proof systems are proof systems for finite collections of polynomial equations having
no $0,1$ solutions over some fixed field. (Formally, each different field yields a different
algebraic proof system.)
Proof-lines in algebraic proofs (or refutations) consist of polynomials $p$
over the given fixed field.
Each such proof-line is interpreted as the polynomial equation $p=0$.  To consider
the \emph{size} of algebraic refutations we fix the way polynomials inside refutations
are written.

\begin{notation}
An \emph{inference rule} is written as $ \frac{A}{B}\, $ or $ \,\frac{A\ \ B}{C} $, meaning that given the
proof-line $ A $ one can deduce the proof-line $ B $, or given both the proof-lines $ A, B $ one can
deduce the proof-line $ C $, respectively.
\end{notation}

The Polynomial Calculus is a propositional algebraic proof system first considered in \cite{CEI96}:

\begin{definition} \emph{\textbf{(Polynomial Calculus (PC)).}}\label{def:PC}
Let $\mathbb{F}$ be some fixed field and let $Q= \set{Q_1,\ldots,Q_m}$ be a collection of
multivariate polynomials from $\mathbb{F}[x_1,\ldots, x_n]$.
Let the set of axiom polynomials be:
\begin{description}
\item[\quad Boolean axioms]\qquad
$ x_i\cdot(1-x_i)\,, \qquad  \mbox{ for all $\,1\le i\le n$\,.}$
\end{description}
A \emph{PC proof} from  $Q$ of a polynomial $g$ is a finite sequence $\pi =(p_1 ,...,p_\ell)$
of multivariate polynomials from $\mathbb{F}[x_1,\ldots, x_n]$,
where $p_\ell=g$ and for every $ 1\le i\le \l$,
either $p_i = Q_j\,$ for some $j\in[m]$, or $p_i$ is a Boolean axiom, or $p_i$ was deduced from
$p_j,p_k\,$, for  $j,k<i$, by one of the following inference rules:
\begin{description}
  \item[\quad Product]
    \[
        \frac{p}{x_r\cd p}\,, \qquad \,{\mbox{for  $ 1\le r\le n $\,. }}
    \]
  \item[\quad Addition]
    \[
        \frac{{p\quad \quad q}}{{a\cd p + b\cd q}}\,,\qquad \,{\mbox{for}}\  a,b \in \F\,.
    \]
\end{description}
A \emph{PC refutation of} $Q$ is a proof of $\;1$ (which is interpreted as $1=0$, that is the unsatisfiable equation standing for {\rm\textsf{false}}) from $Q$. The \emph{degree} of a PC-proof is the maximal degree of a polynomial in the proof.
The \textbf{size} of a \emph{PC} proof is the total number of monomials (with nonzero coefficients) in all the proof-lines.
\end{definition}

\begin{importantnote}
The size of PC proofs can be defined as the total formula sizes of all proof-lines, where polynomials are written as \emph{sums of monomials}, or more formally, as (unbounded fan-in depth-$ 2 $) $\Sigma\Pi$ formulas.\footnote{A $\Sigma\Pi$ formula $ F $ is an algebraic formula whose underlying tree is of depth $ 2 $ and has unbounded fan-in, such that the root is labeled with a plus gate, the children of the root are labeled with product gates and the leaves are labeled with either variables or field elements.}
This complexity measure is equivalent---up to a factor of $ n $---to the usual complexity measure counting the total number of monomials appearing in the proofs (Definition \ref{def:PC}).
\end{importantnote}

\begin{definition}\emph{\textbf{(Polynomial Calculus with Resolution (PCR)).}}\label{def:PCR}
The PCR proof system is defined similarly to PC (Definition \ref{def:PC}), except that for every variable $ x_i $ a new formal variable $ \bar x_i $ and a new axiom $ x_i +\bar x_i - 1 $ are added to the system, and the Boolean axioms of PCR are as follows:
\begin{description}
\item[\quad Boolean axioms]\qquad
$ x_i\cdot \bar x_i\;.$
\end{description}
The inference rules, and all other definitions are similar to that of PC. Specifically, the \emph{size} of a PCR proof is defined as the total number of monomials in all proof-lines (where now we count monomials in the variables $ x_i $ and $ \bar x_i $).
\end{definition}

\subsection{Proof systems and simulations}
Let $ L\subseteq \Sigma^*$ be a language over some alphabet $\Sigma $.
A \emph{proof system for a language $ L $} is a polynomial-time algorithm $A$ that receives $ x\in\Sigma^* $ and a string $\pi$ over a binary alphabet (``the [proposed] proof" of $x$), such
that there exists a $\pi$ with $A(x,\pi)=\textsf{true}$ if and only if $x\in L $.
Following \cite{CR79}, a \emph{Cook-Reckhow proof system} (or a \emph{propositional proof system}) is a proof system for the language of propositional tautologies in the De Morgan basis $\set{\textsf{true},\textsf{false},\Or,\And,\Not} $ (coded in some efficient [polynomial-time] way, e.g., in the binary $ \set{0,1} $ alphabet).

Assume that $ \mathcal P $ is a proof system for the language $ L $, where $ L $ is not the set of propositional tautologies in De Morgan's basis. In this case we can still consider $ \mathcal P $ as a proof system for propositional tautologies by fixing a translation between $ L $ and the set of propositional tautologies in De Morgan basis (such that $ x\in L $ iff the translation of $ x $ is a propositional tautology [and such that the translation can be done in polynomial-time]).
If two proof systems $ \mathcal P_1 $ and $ \mathcal P_2 $ establish two different languages $ L_1, L_2 $, respectively, then for the task of comparing their relative strength we fix a translation from one language to the other.
In most cases, we shall confine ourselves to proofs establishing propositional tautologies or unsatisfiable CNF formulas.

A propositional proof system is said to be a \emph{propositional refutation system} if it establishes the language of unsatisfiable propositional formulas (this is clearly a propositional proof system by the definition above, since we can translate every unsatisfiable propositional formula into its negation and obtain a tautology).

\begin{definition}\label{defnSim} Let $\mathcal P_1, \mathcal P_2$ be
two proof systems for the same language $ L$ (in case the proof systems are for two different languages we fix a translation from one language to the other, as described above).
We say that $\mathcal P_2$ \emph{polynomially simulates} $\mathcal P_1$
if given a $\mathcal P_1$ proof (or refutation) $\pi$ of a $ F$,
then there exists a proof (respectively, refutation) of $F$ in $\mathcal P_2$ of size polynomial
in the size of $\pi$.
In case $\mathcal P_2$ polynomially simulates $\mathcal P_1$ while $\mathcal P_1$
does not polynomially simulates $\mathcal P_2$ we say that $\mathcal P_2$
is \emph{strictly stronger} than $\mathcal P_1$.
\end{definition}

\section{Polynomial calculus over noncommutative formulas}

\subsection{Discussion}
In this section we propose a possible formulation of algebraic propositional proof systems that operate with noncommutative polynomials. We observe that dealing with \emph{propositional} proofs---that is, proofs whose variables range over $ 0,1 $ values---makes the variables ``semantically'' commutative. Therefore, for the proof systems to be complete (for unsatisfiable collections of noncommutative polynomials over $ 0,1 $ values), one may need to introduce rules or axioms expressing commutativity.
We show that such a natural formulation of proofs operating with noncommutative formulas
polynomially simulate the entire Frege system.

This justifies---if one is interested in concentrating on propositional proof systems weaker than Frege (and especially on lower bounds questions)---our formulation in Section \ref{sec:ordered_ABP_proofs} of algebraic proofs operating with noncommutative algebraic formulas with a \emph{fixed product order} (called \emph{ordered formulas}).
The latter system can be viewed as operating with commutative polynomials over a field precisely like PC, while the complexity of proofs is measured by the total sizes of ordered formulas needed to write the polynomials in the proof. In other words, the role played by the noncommutativity in this system is only in measuring the sizes of proofs: while in PC-proofs the size measure is defined as the number of monomials appearing in the proofs---or equivalently, the total size of formulas in proofs in which formulas are written as (depth-$2$) $\Sigma\Pi$ circuits---the proof system developed in Section \ref{sec:ordered_ABP_proofs} is measured by the total ordered formula size.

\subsection{The proof system NFPC}
We now define a proof system operating with noncommutative polynomials written as noncommutative algebraic formulas.

In algebraic proof systems like the polynomial calculus we transform unsatisfiable propositional formulas into a collection $ Q $ of polynomials having no solution over a field $ \F$.
In the noncommutative setting we translate unsatisfiable propositional  formulas into a collection $ Q $ of noncommutative polynomials from $ \F\langle x_1,\ldots,x_n\rangle $ that have no solution over any noncommutative $\F $-algebra (e.g., the matrix algebra with entries from $ \F $). Although our ``Boolean'' axioms will not force only $ 0,1 $ solutions over noncommutative $\F $-algebras, they will be sufficient for our purpose: every unsatisfiable propositional formula translates (via a standard polynomial translation) into a collection $ Q $ of noncommutative polynomials from $ \F\langle x_1,\ldots,x_n\rangle $, for which $ Q $ and the Boolean axioms have no (common) solution in any noncommutative $ \F $-algebra. Furthermore, \emph{the Boolean axioms will in fact force commutativity of variables product}---as required for variables that range over $ 0,1 $ values (although, again, the Boolean axioms do not force only $ 0,1 $ values when variables range over noncommutative $\F $-algebras).

\begin{definition}[Polynomial calculus over noncommutative formulas: \ncp0]\label{def:ncFPC}
Fix a field $\mathbb{F}$ and let $Q:= \set{q_1,\ldots,q_m}$ be a collection of noncommutative
polynomials from $\F\langle x_1,\ldots, x_n\rangle$.
Let the set of axiom polynomials be:
\begin{description}
\item[\quad Boolean axioms]
\begin{align}\notag
 x_i\cd (1-x_i)\,,              & \qquad  \mbox{ for all $\,1\le i\le n\,$}. \\ \notag
 x_i\cd x_j -x_j\cd x_i \,,     & \qquad  \mbox{ for all $\,1\le i\neq j\le n\,.$}
\end{align}
\end{description}

Let $\pi =(p_1,\ldots,p_\ell)$ be a sequence of noncommutative polynomials from $\F\langle x_1,\ldots, x_n\rangle$, such that for each $i\in[\ell]$, either $\,p_i = q_j\,$ for some
$j\in[m]$, or $p_i$ is a Boolean axiom, or $p_i$ was deduced by one of the
following inference rules using $p_j, p_k\,$, for $j,k<i$:
\begin{description}
  \item[\quad Left/right product]
    \[
        \frac{p}{x_r\cd p}   \qquad \qquad
                    \frac{p}{p\cd x_r} \,,\qquad {\mbox{for  $ 1\le r\le n $\,. }}
    \]
  \item[\quad Addition]
    \[
        \frac{{p\quad \quad q}}{{a\cd p + b\cd q}}\,,\qquad \,{\mbox{for}}\  a,b \in \F\,.
    \]
\end{description}
We say that $ \pi $ is an \emph{\ncp0 proof of $ p_\ell $ from $Q$} if all proof-lines in $ \pi $
are written as noncommutative formulas.
(The semantics of an \emph{\ncp0} proof-line $ p_i$ is the polynomial equation $p_i=0$.)
An \emph{\ncp0 refutation of} $Q$ is a proof of the polynomial $\;1$ from $Q$.
The \textbf{size} of an {\rm \ncp0} proof $\pi$ is defined as the total sizes of all the noncommutative formulas
 in $\pi$ and is denoted $|\pi|$.
\end{definition}

\begin{remark}
(i) The Boolean axioms  might have roots different from $ 0,1 $ over noncommutative $\F $-algebras.
(ii) The Boolean axioms are true for $ 0,1 $ assignments: $ x_i\cd x_j -x_i \cd x_j =0 $ for all $ x_i,x_j \in\zo $.
\end{remark}

We now show that \ncp0 is a sound and complete Cook-Reckhow proof system.
First note that we have defined \ncp0 with no rules expressing the polynomial-ring axioms (the latter are sometimes added to  algebraic proof systems operating with algebraic formulas for the purpose of verifying that every formula in the proof was derived correctly [via the deduction rules of the system] from previous lines; see discussion in Section \ref{sec:results}).
Nevertheless, due to the deterministic polynomial-time PIT procedure
for noncommutative formulas (Theorem \ref{thm:RS04-PIT}) the proof system defined will be a Cook-Reckhow system (that is, verifiable in polynomial-time [whenever the base field and its operations can be efficiently represented]).\QuadSpace

\begin{proposition}
There is a deterministic polynomial-time algorithm that decides whether a given string is an \ncp0-proof (over efficiently represented fields).
\end{proposition}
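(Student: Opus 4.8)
The plan is to run the obvious parse-and-check procedure, using the deterministic noncommutative PIT algorithm of Theorem~\ref{thm:RS04-PIT} as the only nonelementary subroutine. First I would perform linear-time syntactic checks: that the input codes a field $\F$ whose elements and operations are efficiently representable, a finite set $Q=\{q_1,\ldots,q_m\}$ of well-formed noncommutative formulas over variables among $x_1,\ldots,x_n$, and a finite sequence $\pi=(p_1,\ldots,p_\ell)$ of well-formed noncommutative formulas over the same variables; reject if any of this fails. It then remains to verify, for every $i\in[\ell]$, that the formula $p_i$ admits at least one legal justification, and (if $\pi$ is being checked as a refutation rather than merely as a proof of $p_\ell$) that $p_\ell$ computes $1$, which is the single query ``is $p_\ell-1\equiv 0$?'' for the algorithm of Theorem~\ref{thm:RS04-PIT}.

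Each line $p_i$ has four possible kinds of justification, and I would simply try them all, accepting line $i$ iff one succeeds. (a) \emph{Initial polynomial}: for each $j\in[m]$ build the noncommutative formula $p_i-q_j$ (of size $O(|p_i|+|q_j|)$) and ask whether it computes $0$. (b) \emph{Boolean axiom}: for each $k\in[n]$ test ``$p_i-x_k\cd(1-x_k)\equiv 0$?'', and for each ordered pair $k\neq k'$ test ``$p_i-(x_k\cd x_{k'}-x_{k'}\cd x_k)\equiv 0$?''. (c) \emph{Left/right product}: for each $j<i$ and each $r\in[n]$ test ``$p_i-x_r\cd p_j\equiv 0$?'' and ``$p_i-p_j\cd x_r\equiv 0$?'', where $x_r\cd p_j$ and $p_j\cd x_r$ are formulas of size $|p_j|+O(1)$. (d) \emph{Addition}: for each pair $j,k<i$, decide whether there exist $a,b\in\F$ with $p_i\equiv a\cd p_j+b\cd p_k$. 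Cases (a)--(c) are each a polynomial number of direct calls to Theorem~\ref{thm:RS04-PIT}; case (d) is the one that needs care, since $a,b$ are not recorded in $\pi$ and $\F$ may be infinite, so we cannot enumerate them.

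The idea for (d) is to recover candidate coefficients by polynomial-time linear algebra and then confirm with one PIT query. The key auxiliary fact is that the coefficient of any \emph{explicitly given} monomial $x_{s_1}\cdots x_{s_d}$ in the polynomial computed by a size-$t$ formula $F$ can be computed in time $\mathrm{poly}(t,d)$: substituting $x_s\mapsto\sum_{\{u\in[d]\,:\,s_u=s\}}E_{u,u+1}$ in the matrix algebra $M_{d+1}(\F)$ and reading off the $(1,d+1)$ entry of $F$ evaluated at these matrices isolates exactly that coefficient, because only the target monomial telescopes to the matrix unit $E_{1,d+1}$; combined with homogeneous-part extraction this also lets us, in polynomial time, either certify $p_j\equiv 0$ or exhibit one monomial on which $p_j$ has a nonzero coefficient. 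Given such a witness one eliminates, say, the unknown $a$ by using a monomial on which $p_j$ does not vanish, reducing the search for $(a,b)$ to a one-parameter search of the same type, which is handled in the same way (a short, depth-two case analysis covering the sub-cases where some of the relevant formulas are identically $0$). Each sub-step costs one monomial-coefficient computation and one call to Theorem~\ref{thm:RS04-PIT}, on formulas of size $O(|p_i|+|p_j|+|p_k|)$. (An alternative to the monomial trick is to evaluate $p_i,p_j,p_k$ on polynomially many explicit matrix tuples over $\F$---over a small extension of $\F$ if $\F$ itself is too small, since a noncommutative polynomial of degree $d$ vanishing on all matrices of size somewhat above $d$ is identically zero---to get a linear system in $a,b$, solve it, and confirm by PIT.)

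Finally, counting: $\ell$ and each $|p_i|$ are at most the input length $N$, there are $O(N)$ lines, and for each line cases (a)--(d) generate $O(N)$, $O(N^2)$, $O(N^2)$, and $O(N^2)$ subtasks respectively, each solved with $\mathrm{poly}(N)$ arithmetic over $\F$ plus $O(1)$ calls to the polynomial-time procedure of Theorem~\ref{thm:RS04-PIT} on formulas of size $O(N)$; so the whole algorithm runs in polynomial time. The main obstacle, and essentially the only place where the argument is not completely routine, is exactly case (d): verifying an application of the addition rule when the field coefficients are not written down, which forces us to extract them deterministically rather than by searching over $\F$.
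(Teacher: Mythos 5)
Your proposal takes a genuinely different, and more ambitious, route than the paper. The paper's proof is essentially one line: it assumes the proof string carries annotations---each line indicates from which previous lines and by which rule it was inferred (and for the addition rule one may as well record the scalars $a,b$)---so that verifying a line reduces to a constant number of calls to the PIT algorithm of Theorem~\ref{thm:RS04-PIT} on formulas such as $x_i\times F_1$ versus $F_2$. This is legitimate because the proof format of a Cook--Reckhow system can simply be defined to include such annotations, at polynomial overhead. You instead verify bare, un-annotated sequences of formulas by trying every possible justification; your cases (a)--(c) are routine, and your treatment of (d)---recovering the unknown $(a,b)$ by comparing coefficients at a witness monomial of $p_j$, reducing to a one-parameter problem, and confirming with a single PIT call---is a sound scheme, as is the matrix-substitution trick for reading off the coefficient of an \emph{explicitly given} monomial. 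So your argument proves something slightly stronger (verifiability even when the rule parameters are not written down), while the paper's annotation convention dissolves the only delicate point, case (d), entirely.

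There is, however, one step you assert without adequate justification, and it is exactly the load-bearing one: that in deterministic polynomial time one can either certify $p_j\equiv 0$ or \emph{exhibit} a monomial on which $p_j$ has a nonzero coefficient. This does not follow from the two tools you cite (coefficient extraction for an explicitly given monomial, plus homogeneous-part extraction): a nonzero degree-$d$ homogeneous component has up to $n^d$ candidate monomials, so coefficient queries alone cannot locate a witness. The claim is true, but needs an extra idea, e.g.\ greedy prefix-fixing: for a prefix word $w$, the sum of those monomials of $p_j$ beginning with $w$ is computed by a polynomial-size noncommutative formula (substitute for each variable its $(|w|+2)\times(|w|+2)$ automaton transition matrix with formula entries and read off one entry---the natural generalization of your full-monomial trick), test it for nonzeroness with Theorem~\ref{thm:RS04-PIT}, and extend the prefix one letter at a time; after at most $\deg(p_j)$ steps this yields a nonzero monomial, and then your case (d) goes through. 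Your parenthetical alternative is weaker than it looks: evaluating on ``polynomially many explicit matrix tuples'' is not a deterministic substitute without a hitting set, since the chosen tuples may all annihilate $p_j$ and $p_k$, leaving the linear system in $a,b$ underdetermined, and one cannot confirm infinitely (or exponentially) many candidate pairs by PIT. None of this machinery is needed under the paper's formulation of the proof format.
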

\begin{proof}
We can assume that the proof also indicates from which previous lines a new line was inferred via the \ncp0  inference rules. Then, by Proposition \ref{thm:RS04-PIT},  there is a polynomial-time algorithm that, e.g.,  given two noncommutative formulas $ F_1,F_2 $ such that the proof indicates that $ F_2 $ was inferred from $ F_1 $ via the Left product rule, decides whether the formula $ x_i\times F_1 $ and $ F_2 $ computes the same noncommutative polynomial.
And similarly for the other deduction rules of \ncp0.
\end{proof}

\begin{proposition}
The systems \ncp0 is sound and complete. Specifically, let $ Q $ be a collection of noncommutative polynomials from $ \F\langle x_1,\ldots,x_n\rangle $. Assume that for every $\F $-algebra, there is no $ 0,1 $ solution for $ Q $ (that is, an $ 0,1 $ assignment to variables that gives all polynomials in $ Q $ the value $ 0 $), then the contradiction $ 1=0 $ can be derived in \ncp0 from $ Q $.
\end{proposition}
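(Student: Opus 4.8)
The plan is to get soundness by a routine check that the axioms and both inference rules are valid over every $\F$-algebra under $0,1$ assignments, and to get completeness by reducing to the completeness of ordinary PC, passing between a noncommutative polynomial and its ``ordered'' commutative representative by means of the commutativity axioms.

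\textit{Soundness.} Fix an $\F$-algebra $\A$ and an assignment $\sigma$ sending each variable to $0_\A$ or $1_\A$. Since $0_\A$ and $1_\A$ lie in the central prime subring of $\A$, the axioms $x_ix_j-x_jx_i$ vanish under $\sigma$; $x_i\cd(1-x_i)$ vanishes because $0$ and $1$ are idempotents; and the addition and left/right product rules clearly send a line that vanishes under $\sigma$ to a line that vanishes under $\sigma$. Hence if $\sigma$ were a $0,1$ solution of $Q$, every line of an \ncp0 proof from $Q$ would vanish under $\sigma$, so $1=1_\A\neq 0_\A$ could not be derived; equivalently, an \ncp0 refutation of $Q$ certifies that $Q$ has no $0,1$ solution over any $\F$-algebra.

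\textit{Completeness.} Assume $Q$ has no $0,1$ solution over any $\F$-algebra; in particular it has none over $\F$ itself. Over the commutative algebra $\F$, a noncommutative polynomial $q$ and its commutative image $\widehat q\in\F[x_1,\dots,x_n]$ take the same value on any $0,1$ assignment, so $\widehat Q:=\set{\widehat q_1,\dots,\widehat q_m}$ together with the axioms $x_i(1-x_i)$ has no common root over $\F$, hence (their common roots being exactly the $0,1$ points) none over $\overline{\F}$. By completeness of PC \cite{CEI96} there is a PC refutation $\rho=(r_1,\dots,r_t)$ of $\widehat Q$. Fix the order $x_1<\cdots<x_n$, and for a commutative polynomial $r$ let $r^{\mathrm{ord}}\in\F\langle x_1,\dots,x_n\rangle$ be the unique noncommutative polynomial with commutative image $r$ all of whose monomials are sorted. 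I claim the sequence obtained from $\rho$ by replacing each $r_s$ with $r_s^{\mathrm{ord}}$ can be filled in, inductively line by line, to an \ncp0 refutation of $Q$, using repeatedly the following \emph{reordering} fact: if $p,p'\in\F\langle x_1,\dots,x_n\rangle$ have the same commutative image, then $p'$ is \ncp0-derivable from $p$. To see the fact, note that $p'-p$ lies in the two-sided ideal generated by the commutators, hence equals a finite sum $\sum_k c_k\,m_k(x_{i_k}x_{j_k}-x_{j_k}x_{i_k})m_k'$ with $c_k\in\F$ and $m_k,m_k'$ monomials; from each axiom $x_{i_k}x_{j_k}-x_{j_k}x_{i_k}$ iterated left products build the prefix $m_k$, iterated right products append the suffix $m_k'$, the addition rule combines these (with coefficients $c_k$) into the single line $p'-p$, and one further addition $1\cd p+1\cd(p'-p)$ gives $p'$. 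Granting this fact: a PC Boolean axiom $x_i(1-x_i)$ is already a sorted \ncp0 Boolean axiom; an initial polynomial $\widehat q_j$ is reached from $q_j$ by the reordering fact; the PC addition rule is simulated verbatim by the \ncp0 addition rule, since a linear combination of sorted polynomials is sorted; and the PC product rule $r\mapsto x_\ell\cd r$ is simulated by one left product yielding $x_\ell\cd r^{\mathrm{ord}}$ followed by the reordering fact to pass to $(x_\ell r)^{\mathrm{ord}}$. Since $1^{\mathrm{ord}}=1$, this yields an \ncp0 refutation of $Q$.

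The only step that is not purely mechanical is the reordering fact, and its content is just the standard presentation of the commutator two-sided ideal together with the observation that \ncp0's single-variable left/right multiplication and binary addition suffice to realize an arbitrary element of that ideal as a proof-line; I expect the prefix/suffix bookkeeping to be the main thing to get right. (If one also wanted a polynomial-size simulation of PC --- which the present statement does not require --- one would have to keep these reordering subderivations small, which takes a little more care.)
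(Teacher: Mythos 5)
Your proposal is correct, and for the completeness half it takes a genuinely different route from the paper. The paper disposes of the proposition in a few lines: soundness as you argue it, and completeness by quoting the (later) Theorem that \ncp0 polynomially simulates $ \mathcal{F\mbox{\rm-}PC} $, combined with the completeness of $ \mathcal{F\mbox{\rm-}PC} $ from \cite{GH03}; there the commutativity issue is handled by Lemma \ref{lem:simulate_gf_fg}, which simulates the commutativity rewrite rule by structural induction on formulas and, crucially, with only polynomial size blow-up. You instead reduce directly to the completeness of ordinary PC \cite{CEI96} and bridge the noncommutative/commutative gap with your ``reordering fact'', proved via the standard presentation of the kernel of the abelianization map as the two-sided ideal generated by the commutators $ x_ix_j-x_jx_i $, each generator term $ c_k\, m_k(x_{i_k}x_{j_k}-x_{j_k}x_{i_k})m_k' $ being realizable from the commutator axiom by iterated left products (prefix), right products (suffix) and binary additions. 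This argument is sound: the ideal-membership claim is correct, the bookkeeping you describe works, and linearity of the sorting map $ r\mapsto r^{\mathrm{ord}} $ makes the addition step exact, so the simulation of PC line by line goes through. What each approach buys: yours is more elementary and self-contained (no appeal to $ \mathcal{F\mbox{\rm-}PC} $ or to the machinery of the simulation theorem), which is all the proposition requires; the paper's choice is economical in context, since the simulation theorem is proved anyway and, unlike your ideal-theoretic reordering step (whose number and size of commutator terms you do not control), it yields the quantitative, polynomial-size statement that the paper actually needs later --- a limitation you correctly flag yourself.
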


\begin{proof}\mbox{}
Soundness holds because both rules of inference are
sound over any $\F $-algebra.
Completeness stems by the simulation of $ \mathcal{F\mbox{\rm-}PC} $ shown in Theorem \ref{theorem:ncp0-simulate-FPC} below (and the fact that if no $\F $-algebra has a solution then also there is no solution in $ \F $ itself, which implies, by completeness of  $ \mathcal{F\mbox{\rm-}PC} $, that there exists an $ \mathcal{F\mbox{\rm-}PC} $  refutation of $ Q $).
\end{proof}

For the next statements we use the algebraic propositional proof system $ \mathcal{F\mbox{\rm-}PC} $ introduced by Grigoriev and Hirsch \cite{GH03} as an algebraic analog of the Frege system.
The proof system $ \mathcal{F\mbox{\rm-}PC} $ is an algebraic propositional proof system operating with (general, that is, commutative) algebraic formulas over a field, and it includes auxiliary rewriting rules allowing to develop equal polynomials syntactically via the polynomial-ring axioms.
The proof system $ \mathcal{F\mbox{\rm-}PC} $ has the Boolean axioms of PC, the rules of PC and in addition the rewrite rules expressing the polynomial-ring axioms. Each line in $ \mathcal{F\mbox{\rm-}PC} $ is treated as a \emph{term}, that is, a formula, and so the rules are also syntactic: addition of terms via the plus gate and product of a term by a variable from the left.
We first need to define the notion of a rewrite rule:

\begin{definition}[Rewrite rule]
A \emph{rewrite rule} is a pair of formulas $ f,g $ denoted $ f \rightarrow g $.
Given a formula $ \Phi $, an \emph{application of a rewrite rule $ f \rightarrow g $ to $\Phi $}
is the result of replacing at most one occurrence of $ f $  in  $ \Phi $  by  $ g $ (that is,
substituting a subformula $ f $  inside $\Phi $ by the formula $ g $).
We write $ f \leftrightarrow g $ to denote the pair of rewriting rules $ f \rightarrow g $ and $ g \rightarrow f $.
\end{definition}

\begin{definition}[$ \mathcal{F\mbox{\rm-}PC} $ \cite{GH03}]\label{def:F-PC}
 Fix a field $ \F $. Let $ F:=\set{f_1,\ldots,f_m} $ be a collection of \emph{formulas}\footnote{Note here that we are talking about formulas (treated as syntactic terms), and \emph{not} polynomials. Also notice that all formulas in $ \mathcal{F\mbox{\rm-}PC} $ are (commutative) formulas computing (commutative) polynomials.}
 computing polynomials from $ \F[x_1,\ldots,x_n] $.
 Let the set of axioms be the following formulas:
 \begin{description}
   \item[\quad Boolean axioms]\qquad
   $ x_i\cdot(1-x_i)\,, \qquad  \mbox{ for all $\,1\le i\le n\,$.}$
 \end{description}
    A sequence $\pi =(\Phi_1,\ldots,\Phi_\ell)$ of formulas computing polynomials from
   $\F[x_1,\ldots, x_n]$\, is said to be \emph{an $ \mathcal{F\mbox{\rm-}PC} $ proof of $ \Phi_\ell $ from $F$},
   if for every $i\in[\ell]$ we have one of the following:
   \begin{enumerate}
      \item $\Phi_i = f_j\,$, for some $j\in[m]$;
      \item $\Phi_i$ is a Boolean axiom;
      \item $\Phi_i$ was deduced by one of the following inference rules from previous proof-lines
            $\Phi_j, \Phi_k\,$, for $j,k<i$:
            \begin{description}
                \item[\quad Product]
                \[
                    \frac{\Phi}{x_r\cd \Phi}\ ,   \qquad \qquad \mbox{for $ r\in[n]$}\,.
                \]
                \item[\quad Addition]
                \[
                    \frac{{\Phi\quad \quad \Theta}}{{a\cd \Phi + b\cd \Theta}}\ ,\qquad
                            \,{\mbox{for}}\  a,b \in \F\,.
                \]
            \end{description}
            (Where $\Phi, x_r\cd\Phi, \Theta, a\cd\Phi, b\cd\Theta $ are \emph{formulas}
            constructed as displayed; e.g.,
            $ x_r\cd\Phi $ is the formula with product gate at the root having the
            formulas $ x_r $ and $ \Phi $
            as children.)\footnote{In \cite{GH03} the product rule of $ \mathcal{F\mbox{\rm-}PC} $ is
            defined so that one can derive $ \Theta\cd\Phi $ from $ \Phi $,
            where $ \Theta $ is any formula, and not just a variable. However, the definition
            of $ \mathcal{F\mbox{\rm-}PC} $ in \cite{GH03} and our Definition \ref{def:F-PC}
            polynomially-simulate each other.}
      \item $\Phi_i $ was deduced from previous proof-line $\Phi_j $, for $ j<i $, by one of the following \emph{rewriting rules}
      expressing the polynomial-ring axioms (where $f,g,h$ range over all algebraic formulas computing polynomials in $\F[x_1,\ldots,x_n]$):
             \begin{description}
                 \item[Zero rule]
                 $ 0\cd f \leftrightarrow 0 $

                \item[Unit rule]
                $  1\cd f \leftrightarrow f$

                \item[Scalar rule]
                $ t \leftrightarrow \alpha $, where $t $ is
                a formula containing no variables (only field $ \F $  elements) that computes the constant $ \alpha\in\F $.

                \item[Commutativity rules]
                $ f + g \leftrightarrow g + f \,$, \qquad $ f\cd g \leftrightarrow g\cd f$
                \item[Associativity rule]
                $ f + (g+h) \leftrightarrow (f+g)+h \,$,  \qquad    $ f\cd(g\cd h) \leftrightarrow (f\cd g)\cd h $
                \item[Distributivity rule]
                $ f \cd(g+h) \leftrightarrow (f\cd g)+(f\cd h) $
             \end{description}

    \end{enumerate}
(The semantics of an $ \mathcal{F\mbox{\rm-}PC} $ proof-line $ p_i$ is the polynomial equation $p_i=0$.)
 An \emph{$ \mathcal{F\mbox{\rm-}PC} $ refutation of} $F$ is a proof of the formula $\;1$ from $F$.
 The \textbf{size} of an  $ \mathcal{F\mbox{\rm-}PC} $ proof $\pi$ is defined as the total sizes of all
 formulas in $\pi$ and is denoted by $|\pi|$.
\end{definition}

\begin{theorem}\label{theorem:ncp0-simulate-FPC}
\ncp0 (over any field) polynomially-simulates Frege. Specifically, \ncp0 polynomially-simulates
$ \mathcal{F\mbox{\rm-}PC} $ in the following sense: let $ f_1,\ldots,f_m $ be a set of commutative formulas computing (commutative) polynomials that have no common $ 0,1 $ root, and assume that there is a size $ s $\, $ \mathcal{F\mbox{\rm-}PC} $ refutation of $ f_1,\ldots,f_m $. Then,  there exists an \ncp0\ refutation of the same set of formulas $ f_1,\ldots,f_m $ (but now viewed as computing noncommutative polynomials) of size polynomial in $ s $.
\end{theorem}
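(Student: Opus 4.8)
The key observation is that \ncp0 has essentially the same inference rules as $\mathcal{F\mbox{\rm-}PC}$ (Product, but now left \emph{and} right; and Addition) plus the same Boolean axiom $x_i(1-x_i)$, but it \emph{lacks} the rewrite rules expressing the polynomial-ring axioms. On the other hand, \ncp0 gains the extra commutativity axioms $x_ix_j - x_jx_i$ and is a ``semantic'' system, in the sense that a proof-line may be written as \emph{any} noncommutative formula computing the intended polynomial. So the strategy is: given an $\mathcal{F\mbox{\rm-}PC}$ refutation $\pi = (\Phi_1,\dots,\Phi_\ell)$, I would construct an \ncp0 refutation in which, roughly, for each $\Phi_i$ I keep the ``same'' formula, but interpret it now as a noncommutative formula; the rewrite-rule steps of $\pi$ become steps that need to be re-derived (or shown to be vacuous) in \ncp0.

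\textbf{First}, I would handle the rewrite rules. The rewrite rules of $\mathcal{F\mbox{\rm-}PC}$ split into two kinds. The commutative-ring rules that hold already in the noncommutative ring --- namely Zero, Unit, Scalar, Associativity, Distributivity, and the \emph{additive} commutativity $f+g \leftrightarrow g+f$ --- are harmless: if $\Phi_{i+1}$ is obtained from $\Phi_i$ by one of these, then $\Phi_i$ and $\Phi_{i+1}$ compute the \emph{same} noncommutative polynomial, so in the semantic system \ncp0 the step ``$\Phi_i \leadsto \Phi_{i+1}$'' requires nothing --- the two formulas are interchangeable as witnesses for the same proof-line. The only genuinely noncommutative rewrite rule is the multiplicative commutativity $f\cd g \leftrightarrow g\cd f$. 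Here I would use the commutativity Boolean axioms $x_ix_j - x_jx_i$: the point of including them is precisely to let \ncp0 derive, for any two formulas $f,g$ over $0,1$-valued variables, that $f\cd g - g\cd f$ is provably $0$. The main technical lemma would be: \emph{for any noncommutative formulas $f,g$ of size $\le s$, there is an \ncp0 derivation of $f\cd g - g\cd f$ (equivalently, a derivation letting us pass between proof-lines $p\cd f\cd g\cd q$ and $p\cd g\cd f\cd q$) of size polynomial in $s$ and in the number of variables.} This is proved by induction on the structure of $f$ and $g$, pushing the commutation down to variable-by-variable swaps, each of which is an instance (after multiplying the axiom $x_ix_j - x_jx_i$ on the left and right by the appropriate monomials) of the commutativity axiom; addition/distributivity let one reassemble. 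Since an $\mathcal{F\mbox{\rm-}PC}$ proof-line has size $\le s$ and there are $\le \ell \le s$ such rewrite steps, the total overhead is polynomial in $s$.

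\textbf{Second}, the non-rewrite lines. An axiom line $\Phi_i = f_j$ or a Boolean axiom stays as is. A Product step $\Phi/x_r\cd\Phi$ of $\mathcal{F\mbox{\rm-}PC}$ is an instance of the \ncp0 Left-product rule (we simply never use the Right-product rule, which is fine --- it can only add strength). An Addition step is identical. So these lines transfer verbatim. Concatenating: replace each $\mathcal{F\mbox{\rm-}PC}$ line by itself (viewed noncommutatively), and splice in, before each line obtained by a multiplicative-commutativity rewrite, the polynomial-size \ncp0 sub-derivation witnessing the swap. The final line $1$ is unchanged, so we obtain an \ncp0 refutation of $f_1,\dots,f_m$ of size polynomial in $s$.

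\textbf{The main obstacle} is the commutativity lemma --- making precise and bounding the cost of the \ncp0 derivation that commutes two arbitrary subformulas when the formulas are not flattened into sums of monomials. One has to be careful that \ncp0's rules only allow multiplying a whole \emph{proof-line} by a single variable (from left or right), not multiplying an arbitrary subformula; so the swap $p\cd x_ix_j\cd q \leadsto p\cd x_jx_i\cd q$ cannot be done by ``locally editing a subformula'' but must be engineered from the axiom $x_ix_j - x_jx_i$ by (repeatedly) applying the Product rule to build up the prefix $p$ and suffix $q$ as products of single variables, then using Addition --- and one must argue this works uniformly for $p,q$ that are themselves formulas, not just monomials, using distributivity (which is free in the semantic setting) to reduce to the monomial case, and controlling the blow-up. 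Once the lemma is in hand with a polynomial bound, the rest of the simulation is bookkeeping. (Since Frege is known to be polynomially simulated by $\mathcal{F\mbox{\rm-}PC}$ --- the latter being designed in \cite{GH03} as its algebraic analog --- the displayed ``\ncp0 polynomially-simulates Frege'' follows by composing simulations, via the fixed translation between propositional tautologies and unsatisfiable systems of polynomial equations.)
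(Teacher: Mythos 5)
Your proposal matches the paper's proof in all essentials: the ring rewrite rules other than multiplicative commutativity are absorbed ``for free'' by the semantic choice of noncommutative formulas, the commutativity axioms $x_ix_j-x_jx_i$ are lifted by structural induction to a polynomial-size \ncp0 derivation of $f\cd g-g\cd f$ (the paper's Lemma \ref{lem:simulate_gf_fg}), substitution into a context is handled by an auxiliary claim (the paper's Claim \ref{cla:substitution_in_ncp0}, which plays the role of your prefix/suffix rebuilding), and the Product/Addition steps and the composition with the \cite{GH03} simulation of Frege are identical. One small caution: prefer the structural induction you first describe over ``flattening to the monomial case,'' since full expansion into monomials can be exponential, but as stated your main route is exactly the paper's.
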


\begin{proof}
By \cite{GH03} (see Theorem 3 there), $ \mathcal{F\mbox{\rm-}PC} $ polynomially simulates Frege.
We proceed by showing a simulation of $ \mathcal{F\mbox{\rm-}PC} $ by \ncp0 by induction on the number of steps in an $ \mathcal{F\mbox{\rm-}PC} $ proof.

\Base Axioms and initial formulas. All axioms of $ \mathcal{F\mbox{\rm-}PC} $ are also axioms in \ncp0. Also, if the $ \mathcal{F\mbox{\rm-}PC} $ refutation uses an initial formula $ f_i $, then we use the same formula in \ncp0.

\Induction

\case 1 Addition rule.
Assume we derive in $ \mathcal{F\mbox{\rm-}PC} $ the \emph{formula} $ p+q $.
By induction hypothesis we already have the two formulas $ p,q$ in \ncp0.
Thus, we can add them via the addition rule.

\case 2 Product rule.
Assume we derive the formula $ x_i\cd p$ from the formula $ p $ in $ \mathcal{F\mbox{\rm-}PC} $.
By induction hypothesis we already have the formula $ p $ in \ncp0.
Thus, we can derive $ x_i\cd p $ by the Left product rule.

\case 3 Rewriting rules. Assume we derived a formula $ f $ using one of the rewriting rules of $ \mathcal{F\mbox{\rm-}PC} $.
The rewriting rules of associativity, distributivity, scalar rule, and unit and zero rules of $ \mathcal{F\mbox{\rm-}PC} $ \emph{do not change the noncommutative polynomial computed by an algebraic formula}.
Therefore, we get them ``for free'' in \ncp0, in the sense that
we can choose to write a noncommutative polynomial $ p $ in the proof as any noncommutative formula, as long as the chosen formula computes the noncommutative polynomial $ p $.
Thus, we only need to show how to simulate the commutativity rule, namely to show how to simulate commuting a term inside a formula. The key lemma for this is the following:

\begin{lemma}\label{lem:simulate_gf_fg}
Let $ \F $ be any field and let $ f,g $ be two noncommutative formulas computing (non-constant) polynomials from $ \F\langle x_1,\ldots,x_n\rangle $. Then, there is an \ncp0 proof of size polynomial in $ |f| + |g| $ of the formula $ f\cd g - g\cd f $.
\end{lemma}

\begin{proof}
First, we need to show that \ncp0 allows for substitution of identities inside proof-lines.
Let $ A,h $ be noncommutative formulas and assume that the variable $ z $ occurs inside $ A $ only once. Then  $ A[h/z] $ denotes the noncommutative formula obtained from $ A $ by replacing the leaf labeled $ z $ by the formula $ h $.

\begin{claim}\label{cla:substitution_in_ncp0}
Let $ A $ be a noncommutative formula, and let $ z $ be a variable that occurs only once inside $ A $. Let $ h,h' $ be two noncommutative formulas $ h,h'$ of maximal size $s $. Then, there is an \ncp0 proof of  $ A[h/z] - A[h'/z] $ from $ h-h' $ of size polynomial in $ |A|+s $.
\end{claim}

\begin{proofclaim}
Straightforward induction on the size of $ A$.
\end{proofclaim}

We get back to the proof of Lemma \ref{lem:simulate_gf_fg}:
proceed by induction on $ |f|+|g| \ge 2 $.
\Base $ |f|+|g| =2 $. By assumption the polynomials computed by $ f,g $ are both non-constant, and so  $ f=x_i $ and $ g=x_j $, for some $ i,j\in[n]$. Therefore, we are done by the Boolean axiom $ x_i x_j - x_j x_i $ .

\Induction Either $ |f|>1 $ or $ |g|>1 $. Assume without loss of generality that $ |f|>1 $.
Following Claim \ref{cla:substitution_in_ncp0}, we shall use freely substitutions in formulas.
\QuadSpace

\case {(i)} $ f=f_1+f_2 $. Start from
\begin{equation}\label{eq:1000}
f\cd g -f\cd g =f\cd g -(f_1+f_2)\cd g = f\cd g -f_1\cd g -f_2\cd g\,.
\end{equation}
By induction hypothesis we have a proof of $ f_1\cd g - g\cd f_1 $ and of $ f_2\cd g- g\cd f_2 $. Thus, we can substitute these identities in (\ref{eq:1000}), to get $ f\cd g - g\cd f_1 -g\cd f_2 =
f\cd g - g\cd(f_1 +f_2)=f\cd g - g\cd f$.

\case {(ii)}  $ f=f_1\cd f_2 $. Start from
\begin{equation}\label{eq:1010}
f\cd g -f\cd g =f\cd g -(f_1\cd f_2)\cd g = f\cd g -f_1\cd (f_2\cd g)\,.
\end{equation}
By induction hypothesis we have a proof of $ f_2\cd g - g\cd f_2 $. Thus, we can substitute this identity in (\ref{eq:1010}), to get $ f\cd g - f_1\cd(g\cd f_2) = f\cd g - (f_1\cd g)\cd f_2$. By induction hypothesis again, we have $ f_1\cd g- g\cd f_1 $. And similarly, we get by substitution
$ f\cd g - (g\cd f_1)\cd f_2=f\cd g - g\cd f$.

This concludes the proof of Lemma \ref{lem:simulate_gf_fg}
\end{proof}

To conclude the simulation of the commutativity rewrite rule of $ \mathcal{F\mbox{\rm-}PC} $ (which will also conclude the proof of Theorem \ref{theorem:ncp0-simulate-FPC}) we notice that, by Claim \ref{cla:substitution_in_ncp0} and by Lemma \ref{lem:simulate_gf_fg},
for any noncommutative formula $ A $, such that $ z $ is a variable that occurs only once inside $ A $, there is an \ncp0 proof of  $ A[(f\cd g)/z] - A[(g\cd f)/z] $ of size polynomial in $ \left|A[(f\cd g)/z]\right| $.
\end{proof}

\section{Polynomial calculus over ordered formulas}
\label{sec:ordered_ABP_proofs}
In this section we formulate an algebraic proof system \fop~that operates with noncommutative polynomials  in which every monomial is a product of variables in \emph{nondecreasing order} (from left to right; and according to some fixed linear order on the variables), and where polynomials in proofs are written as \emph{ordered formulas}, as defined below.

Let $ X=\set{x_1,\ldots,x_n} $ be a set of variables and let $\F $ be a field. Let $ \preceq $ be a linear order on the variables $ X $. Let $ f=\sum_{j\in J} b_j \mathcal M_j $ be a commutative polynomial from $  \F[x_1,\ldots,x_n] $, where the  $ b_j $'s are coefficient from $ \F $ and the $ \mathcal M_j $'s are monomials in the $ X $ variables.
We define $\llbracket f \rrbracket  \in \F\langle x_1,\ldots,x_n\rangle $ to be the (unique) noncommutative polynomial $ \sum_{j\in J} b_j \cd \llbracket \mathcal M_j \rrbracket$, where $ \llbracket \mathcal M_j\rrbracket $ is the (noncommutative) product of all the variables in $ \mathcal M_j $ such that the order of multiplications respects $ \preceq $.
We denote the image of the map $ \llbracket \cd \rrbracket :\F[x_1,\ldots,x_n] \to \F\langle x_1,\ldots,x_n\rangle $ by $ \mathcal G $. We say that a polynomial is an \emph{ordered polynomial} if it is a polynomial from $ \mathcal G $.

\begin{definition}[Ordered formula]
\label{def:ord-product-formula}
Let $ \preceq $ be some fixed linear order on variables $ x_1,\ldots,x_n $.
A noncommutative formula (Definition \ref{def:nonc_formula}) is said to be an \emph{ordered formula} if the polynomial computed by each of its subformulas is ordered.
We say that an ordered formula $ F $ computes the \emph{commutative} polynomial $ f\in \F[x_1,\ldots,x_n] $ whenever $ F $ computes $ \llbracket f\rrbracket $.
\end{definition}

An equivalent characterization of ordered formulas is as \emph{syntactic ordered formulas}:

\begin{definition}[Syntactic ordered formula]
An ordered formula is a \emph{syntactic ordered formula} if for each of its product gates the left subformula contains only variables that are less-than or equal, via $ \preceq $, than the variables in the right subformula of the gate.
\end{definition}

\begin{proposition}\label{prop:syntactic-ordered-formulas-equal-ordered-fmlas}
There is a polytime algorithm that receives an algebraic formula $\Phi $ and a linear order on its variables,
and returns {\rm \textsf{false}} if $ \Phi $ is not an ordered formula, and otherwise returns a syntactic ordered formula of
the same size as $ \Phi $ that computes the same (ordered) polynomial.
\end{proposition}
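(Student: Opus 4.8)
The plan is to give a constructive algorithm with two phases: a verification phase that checks whether $\Phi$ is an ordered formula, and a rewriting phase that, when the check succeeds, transforms $\Phi$ into an equivalent syntactic ordered formula. The central observation I would exploit is that for each subformula $\Psi$ of $\Phi$, the polynomial it computes is ordered iff the set of monomials of $\Psi$, when laid out with products respecting $\preceq$, is consistent — but checking "is ordered" for a noncommutative polynomial given by a formula reduces, via Theorem~\ref{thm:RS04-PIT}, to a polynomial identity test: $\Psi$ computes an ordered polynomial iff the noncommutative formula $\Psi - \llbracket f_\Psi \rrbracket$ is zero, where $f_\Psi$ is the commutative image of $\Psi$ and $\llbracket f_\Psi\rrbracket$ can be built syntactically if we already know the ``interval'' of variables occurring below each gate. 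So I would proceed bottom-up.

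First, I would compute, for each node $v$ of $\Phi$ (in topological order from leaves to root), the set $V_v \subseteq \{x_1,\dots,x_n\}$ of variables that occur in the subformula rooted at $v$; this is done in linear time. Second, at each product gate $v$ with left child $\ell$ and right child $r$, I would check the syntactic ordering condition: every variable in $V_\ell$ is $\preceq$ every variable in $V_r$ — equivalently, $\max_{\preceq} V_\ell \preceq \min_{\preceq} V_r$ (with the convention that the condition is vacuous if either set is empty). If this holds at every product gate, then by a straightforward induction the polynomial computed at every subformula is ordered, so $\Phi$ is (already essentially) a syntactic ordered formula, and I can return it unchanged — it has the same size as $\Phi$ trivially.

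The subtlety, and what I expect to be the main obstacle, is that $\Phi$ may be an ordered formula in the semantic sense of Definition~\ref{def:ord-product-formula} without satisfying the syntactic condition at every product gate: cancellations among monomials can make the polynomial computed by a subformula ordered even though, gate-by-gate, variables appear in the ``wrong'' order before the cancellations take effect. To handle this I would argue as follows. Run the bottom-up scan; at the first product gate $v$ (in topological order) where the syntactic condition fails, let $f_v \in \F[x_1,\dots,x_n]$ be the commutative polynomial computed below $v$ — but we do not yet know $\Phi$ is ordered, so instead I check whether the subformula at $v$ computes an ordered polynomial by: (i) forming the commutative formula $\Phi_v$ below $v$; (ii) forming the candidate ordered formula $O_v$ that sorts, at each product gate of $\Phi_v$, the two children according to $\min/\max$ of their variable sets — this is a syntactic ordered formula of the same size as $\Phi_v$ by construction; (iii) testing, via Theorem~\ref{thm:RS04-PIT}, whether the noncommutative formula $\Phi_v - O_v$ is the zero polynomial. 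If it is nonzero, then $\Phi$ is not an ordered formula and I return \textsf{false}; if it is zero, I replace the subformula at $v$ by $O_v$ and continue the scan. Since each replacement strictly reduces the number of ``bad'' product gates and does not change sizes or the computed polynomial, the process terminates after at most $|\Phi|$ replacements, each involving one PIT call on a formula of size $O(|\Phi|)$, giving an overall polynomial-time algorithm; and when it halts without returning \textsf{false}, every product gate is syntactically ordered, so the output is a syntactic ordered formula of size $|\Phi|$ computing the same ordered polynomial as $\Phi$. Correctness of the ``return \textsf{false}'' branch uses that if $\Phi$ were an ordered formula then in particular the subformula at $v$ would compute an ordered polynomial, which (by uniqueness of the ordered representative of a given commutative polynomial) must equal $\llbracket f_v\rrbracket = O_v$ as noncommutative polynomials — contradicting the nonzero PIT outcome.
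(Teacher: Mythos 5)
Your verification phase (computing the leaf-variable sets $V_v$ and accepting when every product gate satisfies $\max_{\preceq}V_\ell\preceq\min_{\preceq}V_r$) is fine and agrees with the paper's ``no violating gate'' case. The genuine gap is in your repair/rejection step, and it is caused by exactly the phenomenon you single out: variables that occur as \emph{leaves} of a subformula but cancel in the polynomial it computes. Because of such phantom variables, the candidate $O_v$ obtained by sorting the two children of each product gate by their leaf-variable sets is in general \emph{not} a syntactic ordered formula and does \emph{not} compute $\llbracket f_v\rrbracket$: the sets $V_\ell,V_r$ may interleave, in which case no swap orders the gate, and since the product is noncommutative a swap changes the computed polynomial. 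Concretely, with $x_1\prec x_2\prec x_3$ take $\Phi= x_2\times\bigl(x_3+(x_1-x_1)\bigr)$. Every subformula computes an ordered polynomial ($x_1-x_1=0$), so $\Phi$ \emph{is} an ordered formula and the algorithm must output a same-size syntactic ordered formula computing $x_2x_3$. But $V_\ell=\{x_2\}$ and $V_r=\{x_1,x_3\}$ interleave: if your sort swaps the children, $O_v$ computes $x_3x_2\neq x_2x_3$, the PIT test reports nonzero, and you wrongly return \textsf{false}; if it does not swap, $O_v=\Phi_v$, the replacement makes no progress (contradicting your claim that each replacement strictly decreases the number of bad gates), and the eventual output still has the leaf $x_1$ on the right of the leaf $x_2$, so it is not a syntactic ordered formula. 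Thus the ``uniqueness of the ordered representative'' argument has nothing to attach to, because $O_v=\llbracket f_v\rrbracket$ was never established.

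The missing idea---and the paper's actual proof---is to treat a violating pair of variables semantically rather than by re-ordering the formula: when the check fails at a gate because some $x_i$ in the left subformula exceeds some $x_j$ in the right one, use the PIT algorithm of Theorem \ref{thm:RS04-PIT} to test whether $x_i$ really occurs in the polynomial computed by the left child (compare $F_1$ with $F_1(0/x_i)$), and similarly for $x_j$ on the right. If one of them does not occur, substitute $0$ for that variable at its leaves; this changes neither the size nor the computed polynomial, and repeats only polynomially many times. Only if both variables genuinely occur may one conclude that the polynomial at the gate is not ordered (choosing witnessing monomials of maximal length rules out cancellation across the two factors) and return \textsf{false}. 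Note that some such ``erase, don't re-order'' step is unavoidable: one cannot in general rebuild $\llbracket f\rrbracket$ from an arbitrary same-size formula for $f$ (cf.\ Proposition \ref{prop:lower_bound_ordered_formula}), so a correct repair must exploit that the only obstructions in a truly ordered formula are variables absent from the computed polynomials, which is precisely what the substitution-by-$0$ step does.
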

\begin{proof}
The algorithm is as follows:
Search for a product node in $ F $ that has on its left subformula a variable that is greater (via the order $ \preceq $) than some variable in its right subformula. If there is no such product node, then $ F $ itself is a syntactic ordered formula, and the algorithm returns $ F $.

Otherwise, let $ v $ be a product gate in $ F $, with $ F_1 $ and $ F_2 $ its left and right subformulas, respectively.
And suppose that $ F_1 $ contains the variable $ x_i $ and $ F_2 $ contains the variable $ x_j $, such that $ x_i\succ x_j $.
Let $ h_1,h_2 $ be the polynomials computed by $ F_1 $ and $ F_2$, respectively.

We first check whether $ x_i $ occurs in $ h_1 $.
To this end we substitute every appearance of $ x_i $ in $ F_1 $ by the constant $ 0 $,
and check if the resulted formula, denoted $ F_1(0/x_i) $,  computes the same noncommutative polynomial as $ F_1 $ (using the PIT algorithm for noncommutative formulas).
If the answer to the latter question is ``\textsf{yes}", then we conclude that $ x_i $ does not occur in the polynomial $ h_1 $,
and we run the algorithm with the input formula $ F $ in which $ F_1 $ is substituted by $ F_1(0/x_i) $.
If the answer to the question was ``\textsf{no}", we check in a similar manner whether $ x_j $ occurs in $ h_2 $.
If $ x_j $ does not occur in $ h_2 $ we run the algorithm with the formula $ F $ in which $ F_2 $ is substituted by $ F_2(0/x_j) $
(where $ F_2(0/x_j) $ is $ F_2 $ after substituting every appearance of $ x_j $ by $ 0 $).
If $ x_j $ \emph{does} occur in the polynomial $ h_2 $,
then the polynomial computed at $ v $ is not ordered (since we already know that $ x_i $ occurs in $h_1 $, and so $ h_1\cd h_2 $ is not an ordered polynomial), and so $ F $ is not an ordered formula, and we return \textsf{false}.\HalfSpace

Note that the algorithm described above returns either \textsf{false} (in case $ F $ is not an ordered formula) or
a new formula that computes the same (noncommutative) polynomial as $ F $ and with \emph{the same size} as $ F $ (because the only changes applied to the original formula $ F $ is substitution of variables by the constant $ 0 $). The running time of the algorithm is polynomial in the size of $ F $.
\end{proof}

We can now define \fop~in a convenient way, that is, without referring to noncommutative polynomials: the system \fop~is defined similarly to PC, except that the proof-lines are written as ordered formulas:

\begin{definition}[PC over ordered formulas (\fop)]\label{def:ncFPC}
Let $ \pi=(p_1,\ldots,p_m) $ be a PC proof of $ p_m $ from some set of initial polynomials $ Q $ (that is, $ p_i $ are commutative polynomials from the ring of polynomials $ \F[x_1,\ldots,x_n]  $), and let $ \preceq $ be some linear order on the variables $ x_1,\ldots,x_n  $. The sequence $(f_1,\ldots,f_m)$ in which $ f_i $ is an ordered formula computing $ p_i $ (according to the order $ \preceq $),
is called an \fop~proof of $ p_m $ from $ Q $.
The \textbf{size} of an \fop~proof is the total sizes of all the ordered formulas appearing in it.
\end{definition}


Similar to the proof system \ncp0 we have defined \fop\ with no rules expressing the polynomial-ring axioms.
Also, similar to \ncp0,
the system \fop~will constitute a Cook-Reckhow proof system, that is, there is a deterministic polynomial-time algorithm that decides whether a given string is an \fop~proof or not (whenever the base field and its operations can be efficiently represented):

\begin{proposition}
For any linear order on the variables, \fop~is a sound, complete and polynomially-verifiable refutation system for establishing  that a collection of
(commutative) polynomial equations over a field does not have $ 0,1 $ solutions. Specifically, (considering the language of polynomial translations of Boolean contradictions) \fop~is a Cook-Reckhow proof system.
\end{proposition}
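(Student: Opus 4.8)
The plan is to derive all three properties by reducing \fop~to PC, the only real work being the polynomial-time verification; the tools used are Proposition~\ref{prop:syntactic-ordered-formulas-equal-ordered-fmlas} and the noncommutative PIT algorithm of Theorem~\ref{thm:RS04-PIT}. For soundness and completeness, note that by the definition of \fop~every \fop~proof $(f_1,\dots,f_m)$ carries an underlying PC proof $(p_1,\dots,p_m)$ with each $f_i$ an ordered formula computing $p_i$, and $(f_1,\dots,f_m)$ is an \fop~refutation of a collection $Q$ exactly when $p_m=1$, i.e.\ when $(p_1,\dots,p_m)$ is a PC refutation of $Q$. Conversely, from any PC refutation $(p_1,\dots,p_m)$ of $Q$ one obtains an \fop~refutation by writing each $p_i$ as the canonical depth-$2$ $\Sigma\Pi$ formula $\sum_M b_M\llbracket M\rrbracket$, the sum ranging over the monomials $M$ of $p_i$ and each $\llbracket M\rrbracket$ being the product of the variables of $M$ taken in nondecreasing order with respect to $\preceq$: every subformula of such a formula computes a polynomial lying in $\mathcal{G}$, so it is an ordered formula in the sense of Definition~\ref{def:ord-product-formula}. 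Hence $Q$ has an \fop~refutation iff it has a PC refutation, and soundness and completeness of \fop~over $0,1$ solutions follow from those of PC. (The same remark shows \fop~polynomially simulates PC.)

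It remains to check \fop~proofs in polynomial time. We may assume the candidate string annotates each line $f_i$ with the PC rule used and the indices of its premises (annotated and unannotated variants polynomially simulate one another; alternatively only polynomially many premise pairs need be tried per line). The verifier first uses Proposition~\ref{prop:syntactic-ordered-formulas-equal-ordered-fmlas} to confirm that each $f_i$ is an ordered formula and to replace it by an equivalent syntactic ordered formula of the same size, rejecting on failure. It then uses Theorem~\ref{thm:RS04-PIT} to test that $f_m$ computes $1$ and that each $f_i$ is justified: for an initial polynomial $q_j$ it forms the canonical $\Sigma\Pi$ ordered formula $G_j$ for $\llbracket q_j\rrbracket$ and tests $f_i-G_j\equiv 0$; for a Boolean axiom it tests $f_i-(x_r-x_r\times x_r)\equiv 0$, using $\llbracket x_r(1-x_r)\rrbracket=x_r-x_rx_r$; for an addition step from $f_j,f_k$ with coefficients $a,b$ it tests $f_i-(a\cdot f_j+b\cdot f_k)\equiv 0$, which is legitimate since $\llbracket\cdot\rrbracket$ is $\F$-linear, so $a\cdot f_j+b\cdot f_k$ is already an ordered formula for $\llbracket a\,p_j+b\,p_k\rrbracket$; and for a product step from $f_j$ by $x_r$ it tests $f_i-G\equiv 0$, where $G$ is an ordered formula computing $\llbracket x_r\cdot p_j\rrbracket$ built as described next. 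All of these are polynomial-time computations (each identity test is between noncommutative formulas of size polynomial in $|\pi|$ and the input), so the verifier runs in polynomial time; together with the standard polynomial-time translation of Boolean contradictions into systems of polynomial equations, this yields the Cook-Reckhow property.

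The one genuinely new ingredient---and the step I expect to require the most care---is the product rule, because $\llbracket x_r\cdot p_j\rrbracket$ is \emph{not} the noncommutative product $x_r\cdot f_j$: within each monomial the factor $x_r$ must be inserted at its $\preceq$-correct position, not prepended, so a product step cannot be verified merely by forming the left-product formula $x_r\times f_j$ and calling PIT. I would therefore prove a short lemma: from a \emph{syntactic} ordered formula $F$ for $\llbracket p\rrbracket$ one can build, in time polynomial in $|F|$, a syntactic ordered formula for $\llbracket x_r\cdot p\rrbracket$. The construction recurses on $F$: at a leaf it replaces the leaf by the correctly ordered product of that leaf with $x_r$; at a $+$-gate it recurses into both children; at a $\times$-gate $F_1\times F_2$ it recurses into $F_2$ if $F_2$ contains a variable $x_k$ with $x_k\prec x_r$, and into $F_1$ otherwise, leaving the other child untouched---the syntactic-ordered invariant (every variable on the left of a product gate is $\preceq$ every variable on its right) is precisely what makes the result again syntactic ordered. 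Since $+$-gates spawn recursive calls on disjoint subtrees while $\times$-gates spawn a single recursive call (the other side being copied verbatim), the output has size $O(|F|)$, and an easy induction shows it computes $\llbracket x_r\cdot p\rrbracket$. With this lemma the product-step check reduces, like the others, to one call of Theorem~\ref{thm:RS04-PIT}, completing the argument; everything apart from this insertion lemma is routine.
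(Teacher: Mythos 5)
Your proposal is correct and follows essentially the same route as the paper: soundness and completeness are inherited from PC (writing each PC line as an ordered $\Sigma\Pi$ formula), and polynomial verifiability is obtained by combining the syntactic-ordering check of Proposition~\ref{prop:syntactic-ordered-formulas-equal-ordered-fmlas} with the noncommutative PIT of Theorem~\ref{thm:RS04-PIT}. Your ``insertion lemma'' for building an ordered formula computing $\llbracket x_r\cdot p\rrbracket$ is precisely the paper's Lemma~\ref{lem:pit-ordered-formulas}, proved by the same recursion on syntactic ordered formulas.
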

\begin{proof}
The soundness and completeness of \fop~stem from the soundness and completeness of PC. The fact that \fop~is a Cook-Reckhow proof system is proved in Proposition \ref{prop:OFPC_is_Cook_Reckhow} below.
\end{proof}

We first need the following lemma:
\begin{lemma}\label{lem:pit-ordered-formulas}
For any linear order $ \preceq $ on variables, there exists a polytime algorithm that receives an ordered formula $ \Phi $ computing $ \llbracket f \rrbracket \in \F\langle x_1,\ldots,x_n\rangle $ (for some polynomial $ f \in \F[x_1,\ldots,x_n] $) and a variable $ x_r $, for some $ 1\le r\le n $, and outputs a new ordered formula that computes $ \llbracket x_r\cd f \rrbracket $.
\end{lemma}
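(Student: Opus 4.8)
The plan is to produce the desired formula by a single bottom‑up pass that \emph{inserts} $x_r$ into the correct place of every monomial. First I would normalize: by Proposition~\ref{prop:syntactic-ordered-formulas-equal-ordered-fmlas} I may replace $\Phi$ by a \emph{syntactic} ordered formula of the same size computing the same polynomial, so henceforth at every product gate each variable occurring in the left subformula is $\preceq$ every variable occurring in the right subformula. Writing $g\in\F[x_1,\ldots,x_n]$ for the unique polynomial with $\Phi$ computing $\llbracket g\rrbracket$ (uniqueness holds since $\llbracket\cdot\rrbracket$ is injective: distinct commutative monomials have distinct $\preceq$-sorted words), I would then define recursively, for every subformula $\Psi$ of $\Phi$, a formula $\Psi^{(r)}$ by: a leaf $\alpha\in\F$ is sent to $\alpha\cdot x_r$; a leaf $x_i$ is sent to $x_r\cdot x_i$ if $x_r\preceq x_i$ and to $x_i\cdot x_r$ otherwise; a sum $\Psi_1+\Psi_2$ is sent to $\Psi_1^{(r)}+\Psi_2^{(r)}$; and a product $\Psi_1\times\Psi_2$ is sent to $\Psi_1^{(r)}\times\Psi_2$ if every variable occurring in $\Psi_2$ is $\succeq x_r$, and to $\Psi_1\times\Psi_2^{(r)}$ otherwise. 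The output is $\Phi^{(r)}$; since at each product gate only one child is descended into, the whole construction runs in polynomial (indeed linear) time and $|\Phi^{(r)}|\le 3|\Phi|$.

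Next I would verify, by induction on $|\Psi|$, that $\Psi^{(r)}$ is an ordered formula computing $\llbracket x_r\cdot g_\Psi\rrbracket$, where $g_\Psi\in\F[x_1,\ldots,x_n]$ is the polynomial with $\Psi$ computing $\llbracket g_\Psi\rrbracket$. The leaf cases are one‑line checks, and the sum case is immediate from $\llbracket x_r\cdot(g_1+g_2)\rrbracket=\llbracket x_r g_1\rrbracket+\llbracket x_r g_2\rrbracket$ and the induction hypothesis. The crux is the product gate, handled via the monomial‑level identity: for a commutative monomial $\mathcal M_1$ in the support of $g_1$ and $\mathcal M_2$ in the support of $g_2$, syntactic orderedness makes the variables of $\mathcal M_1$ and of $\mathcal M_2$ $\preceq$-separated, so the $\preceq$-sorted word of $x_r\mathcal M_1\mathcal M_2$ equals $\llbracket x_r\mathcal M_1\rrbracket\cdot\llbracket\mathcal M_2\rrbracket$ when $x_r$ is $\preceq$ every variable of $\mathcal M_2$ (the first case of the definition, since $\mathrm{vars}(\mathcal M_2)\subseteq\mathrm{vars}(\Psi_2)$), and equals $\llbracket\mathcal M_1\rrbracket\cdot\llbracket x_r\mathcal M_2\rrbracket$ when $x_r$ is $\succeq$ every variable of $\mathcal M_1$ (the second case: some variable of $\Psi_2$ is $\prec x_r$, and syntactic orderedness then forces every variable of $\Psi_1$, hence of $\mathcal M_1$, to be $\prec x_r$). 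Summing over the supports and using $\F$-linearity of $\llbracket\cdot\rrbracket$ together with bilinearity of noncommutative multiplication gives $\llbracket x_r g_1 g_2\rrbracket=\llbracket x_r g_1\rrbracket\cdot\llbracket g_2\rrbracket$ in the first case and $\llbracket x_r g_1 g_2\rrbracket=\llbracket g_1\rrbracket\cdot\llbracket x_r g_2\rrbracket$ in the second; with the induction hypothesis on the descended child this shows $\Psi^{(r)}$ computes $\llbracket x_r g_1 g_2\rrbracket$. It is an ordered formula because this polynomial lies in $\mathcal G$ and its two immediate subformulas are ordered — the transformed one by induction, the untouched one as a subformula of the syntactic ordered $\Phi$ — and similarly for the leaf‑replacement gadgets $\alpha\cdot x_r$, $x_i\cdot x_r$. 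Taking $\Psi=\Phi$ yields that $\Phi^{(r)}$ computes $\llbracket x_r f\rrbracket$.

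The hard part, and the reason the normalization step cannot be skipped, is exactly the soundness of the case split at product gates: the decision is made syntactically, but correctness is a statement about polynomials, and cancellation can delete a variable from a subformula's polynomial while leaving it in the formula. On a merely ordered (non‑syntactic) formula this breaks the argument — e.g.\ $x_7\cdot\big(x_8+(x_3-x_3)\big)$ is ordered, yet the rule would insert $x_5$ on the right and return the non‑ordered product $x_7\cdot(x_5 x_8)$. After passing to a syntactic ordered formula the syntactic variable sets on the two sides of each product gate are themselves $\preceq$-separated, which is precisely what validates the two monomial identities above; the rest of the induction (orderedness of every subformula, the variable‑set containment used to bound variables, and the size bound) is routine bookkeeping.
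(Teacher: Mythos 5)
Your proposal is correct and follows essentially the same route as the paper: normalize to a syntactic ordered formula via Proposition~\ref{prop:syntactic-ordered-formulas-equal-ordered-fmlas}, then recursively insert $x_r$ with the identical case split at sum and product gates (descend left if $x_r$ precedes every variable of the right subformula, otherwise descend right). The only difference is that you spell out the monomial-level correctness argument and the need for the syntactic normalization, which the paper leaves implicit.
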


\begin{proof}
We can assume that $ \Phi $ is a \emph{syntactic} ordered formula, as otherwise we can transform it into such a formula by using the algorithm in Proposition \ref{prop:syntactic-ordered-formulas-equal-ordered-fmlas}.
By induction on the size of the formula $ \Phi $, we show that there is an algorithm $ A(\Phi,x_r) $ that outputs
the correct formula.
\vspace{-9pt}

\Base
\begin{enumerate}
\item $A(c,x_r):=c\cd x_r $, for $ c\in \F $.
\item $ A(x_i,x_r):=x_r\cd x_i $ or $ A(x_i,x_r):=x_i\cd x_r $, depending on whether
$ x_i\preceq x_r $ or $ x_i\preceq x_r $, respectively.
\end{enumerate}
\vspace{-11pt}

\Induction
\begin{enumerate}
\item $ A(\Phi_1+\Phi_2,x_r):=A(\Phi_1,x_r)+A(\Phi_2,x_r) $.
\item $ A(\Phi_1\cd\Phi_2, x_r):= A(\Phi_1, x_r)\cd\Phi_2 $, in case $ x_i $ is less-than or equal ($\preceq $) than every variable
in $ \Phi_2 $, and otherwise $ A(\Phi_1\cd\Phi_2, x_r):= \Phi_1\cd A(\Phi_2, x_r) $.
\end{enumerate}
\end{proof}

\begin{proposition}\label{prop:OFPC_is_Cook_Reckhow}
For any linear order $ \preceq $ on variables, there exists a polytime algorithm
that given a sequence $ \pi $ of ordered formulas and another sequence  $ Q_1,\ldots,Q_m,g $ of ordered formulas,
outputs $ 1 $ iff  $ \pi $ is an \fop\ proof of the polynomial computed by $ g $ from the polynomials computed by $ Q_1,\ldots,Q_m $.
\end{proposition}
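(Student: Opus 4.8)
The plan is to reduce the verification task to two polynomial-time primitives that are already available: the algorithm of Proposition~\ref{prop:syntactic-ordered-formulas-equal-ordered-fmlas}, which decides whether a formula is ordered (with respect to the fixed order $\preceq$) and, if so, returns an equivalent syntactic ordered formula of the same size, and the deterministic PIT algorithm for noncommutative formulas, Theorem~\ref{thm:RS04-PIT}. The latter is exactly the equality oracle we need at the level of polynomials: since $\llbracket\cd\rrbracket$ is a bijection between commutative polynomials and ordered polynomials, two ordered formulas $F,G$ compute the same commutative polynomial if and only if they compute the same noncommutative polynomial, i.e.\ if and only if the noncommutative formula $F-G$ computes $0$, which Theorem~\ref{thm:RS04-PIT} tests in polynomial time. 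As in the proof of the corresponding statement for \ncp0, I will assume the candidate proof $\pi=(f_1,\ldots,f_\ell)$ comes annotated: every line carries a tag recording which rule produced it, the indices of the premise line(s), and, for a product step, the variable $x_r$, and, for an addition step, the scalars $a,b\in\F$. (Removing the annotations and recovering them is routine: a product step is found by trying all pairs $(r,j)$; for an addition step one guesses $a,b$ by reading off the coefficients of two suitable monomials from the premise formulas in polynomial time and then confirms the guess with Theorem~\ref{thm:RS04-PIT}.)

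Given an annotated $\pi$ together with $Q_1,\ldots,Q_m,g$, the algorithm first runs Proposition~\ref{prop:syntactic-ordered-formulas-equal-ordered-fmlas} on each of $f_1,\ldots,f_\ell,Q_1,\ldots,Q_m,g$; if any is reported not ordered it outputs $0$, and otherwise it replaces each by the returned syntactic ordered formula (of the same size). It then checks the tagged justification of every line $f_i$. If $f_i$ is tagged as an initial polynomial $Q_j$, it verifies with Theorem~\ref{thm:RS04-PIT} that $f_i-Q_j$ computes $0$. If $f_i$ is tagged as the Boolean axiom for $x_r$, it verifies that $f_i-(x_r-x_r\cd x_r)$ computes $0$, using that $\llbracket x_r(1-x_r)\rrbracket=x_r-x_r\cd x_r$. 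If $f_i$ is tagged as the addition of lines $j,k<i$ with scalars $a,b$, it verifies that $f_i-(a\cd f_j+b\cd f_k)$ computes $0$. If $f_i$ is tagged as the product of line $j<i$ by $x_r$, it applies Lemma~\ref{lem:pit-ordered-formulas} to $(f_j,x_r)$ to obtain an ordered formula $G$ computing $\llbracket x_r\cd p_j\rrbracket$ (where $p_j$ is the commutative polynomial computed by $f_j$) and verifies that $f_i-G$ computes $0$. Finally it verifies, again with Theorem~\ref{thm:RS04-PIT}, that $f_\ell-g$ computes $0$, and it outputs $1$ exactly when all of these checks succeed.

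Correctness follows directly from the definition of \fop\ and of an ordered formula (Definition~\ref{def:ord-product-formula}): $\pi$ is an \fop\ proof of the polynomial of $g$ from the polynomials of $Q_1,\ldots,Q_m$ precisely when each $f_i$ is ordered, the commutative polynomials $p_1,\ldots,p_\ell$ they compute form a PC proof from the polynomials of the $Q_j$'s, and $p_\ell$ equals the polynomial of $g$; the checks above test exactly these conditions, with Theorem~\ref{thm:RS04-PIT} furnishing the polynomial-equality tests. For the running time, there are $O(\ell)$ checks, and each invokes Proposition~\ref{prop:syntactic-ordered-formulas-equal-ordered-fmlas}, Lemma~\ref{lem:pit-ordered-formulas}, or Theorem~\ref{thm:RS04-PIT} on formulas of size polynomial in $|\pi|$ — note that $a\cd f_j+b\cd f_k$ adds only $O(1)$ nodes, and the formula produced by Lemma~\ref{lem:pit-ordered-formulas} is only polynomially larger than $f_j$ — so the whole procedure runs in time polynomial in $|\pi|+\sum_j|Q_j|+|g|$.

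I expect the only genuinely delicate point to be the addition rule in the unannotated case: recovering scalars $a,b$ with $p_i=a\,p_j+b\,p_k$ requires extracting a couple of monomial coefficients from the premise formulas. This is still polynomial time — for instance by converting each ordered formula into an algebraic branching program and reading off coefficients, or directly via the Raz and Shpilka machinery underlying Theorem~\ref{thm:RS04-PIT} — but it is the place where one must be slightly careful, which is why it is cleanest to state the verification for annotated proofs and observe that the unannotated version reduces to it.
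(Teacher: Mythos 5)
Your proof is correct and follows essentially the same route as the paper's: check orderedness of every line via Proposition~\ref{prop:syntactic-ordered-formulas-equal-ordered-fmlas}, and verify each inference step (axiom, addition, product via Lemma~\ref{lem:pit-ordered-formulas}) and the final line by reducing equality of ordered formulas to the noncommutative PIT of Theorem~\ref{thm:RS04-PIT}. Your explicit treatment of recovering the scalars $a,b$ (or assuming annotated proofs) is in fact slightly more careful than the paper, which states the addition check without scalars.
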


\begin{proof}
 We verify the following:
 \begin{enumerate}
       \item All formulas in $ \pi $ are ordered formulas (according to the fixed linear order). By Proposition \ref{prop:syntactic-ordered-formulas-equal-ordered-fmlas}, this can be done in polynomial-time in the size of $ \pi $.
       \item The last formula in $ \pi $ computes $ g $. This can be done by checking that
            the last formula in $ \pi $ computes the same noncommutative polynomial as $ g $ (using the PIT algorithm for noncommutative formulas in Theorem \ref{thm:RS04-PIT}).
       \item For every proof-line $ f\in\pi$ one of the following holds:
       \begin{enumerate}
             \item[(i)] The formula $ f$ computes an axiom. This can be verified by checking whether
             $ f $ computes the same noncommutative polynomial as the formula $ x_i^2-x_i $, for some $ 1\le i\le n $, or whether $ f $ computes some polynomial computed by $ Q_i $, for some $ 1\le i\le m $ (again, by the PIT algorithm for
             noncommutative formulas).
             \item[(ii)] The formula $ f $ computes the same ordered polynomial as $ F+G $, for some pair $ F,G $ of ordered formulas in previous proof-lines (verify by the PIT algorithm for noncommutative formulas).
             \item[(iii)] The formula $ f $ computes $ \llbracket x_i\cd h \rrbracket $, for some $ 1\le i\le n $, where $ h $ is a polynomial computed by some previous proof-line. To check this we do the following: considering a previous proof-line $ H $, we run the algorithm in Lemma \ref{lem:pit-ordered-formulas} where the inputs are $ H $ and $ x_i $. We get a new
                 ordered formula $ H' $, and we check if $ H' $ computes the same noncommutative polynomial as $ f $.
       \end{enumerate}
 \end{enumerate}
\end{proof}

\begin{notes}
\begin{enumerate}
\item In case we assume that there is an apriori fixed linear order of variables, we may speak about ordered formulas without referring explicitly to some linear order.
\item Formally, for different $ n $'s, every set of variables $ x_1,\ldots,x_n $ may have linear orders that are incompatible with each other. Nevertheless, in this paper, given a family $ Q $ of collections of initial polynomials $ \set{Q_n\,|\, n\in\nat}$ parameterized by $ n $, and assuming that $ Q_n\subseteq \F[x_1,\ldots,x_n] $ for all $ n $, we will consider only linear orders such that: for every $ n>1 $, the linear order on $ x_1,\ldots,x_n $ is an \emph{extension} of the linear order on $ x_1,\ldots,x_{n-1} $. Equivalently, we can consider one fixed linear order on a countable set of variables $ X=\{x_1,x_2,\ldots\} $.
\end{enumerate}
\end{notes}

\section{Simulations, short proofs and separations for \fop}
In this section we are concerned with the relative strength of \fop. ~Specifically, we show that \fop~is strictly stronger than the polynomial calculus, polynomial calculus with resolution (PCR, for short; see Definition \ref{def:PCR}) and resolution (for a definition, see for example \cite{ABSRW99}).
For this purpose, we show first that, for any linear order on the variables, \fop~polynomially simulates PCR. Since PCR polynomially simulates both PC and resolution, we get that \fop~ also polynomially simulates PC and resolution.
Second, we show that \fop~admits polynomial-size refutations of tautologies (formally, families of unsatisfiable collections of polynomial equations) that are hard (that is, do not have polynomial-size proofs) in PCR.

Let $ \tau $ denote the linear transformation that maps the variables $ \bar x_i $, for any $ i\in[n] $, to $ (1-x_i) $, and denote $ p\rst\tau $ the polynomial $ p $ under the transformation $ \tau $.
\begin{proposition}
For any linear order on the variables, \fop~polynomially simulates PCR (and PC and resolution).
Specifically, if there is a size $ s $ PCR proof (with the variables $ x_1,\ldots,x_n, \bar x_1,\ldots,\bar x_n $) of $ p $ from the axioms $ p_{j_1},\ldots,p_{j_k} $, then there is an \fop~proof of $ p\rst \tau $ from $ p_{j_1}\rst\tau,\ldots,p_{j_k}\rst\tau $ of size $ O(n\cd s) $.
\end{proposition}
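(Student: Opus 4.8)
The plan is to translate a given PCR refutation line by line into an \fop~refutation, using the substitution $\tau:\bar x_i\mapsto (1-x_i)$ to eliminate the dual variables, and then to check that each resulting proof-line can be written as a small ordered formula. First I would observe that every PCR proof-line is a polynomial $p$ in the $2n$ variables $x_1,\ldots,x_n,\bar x_1,\ldots,\bar x_n$, and that applying $\tau$ produces a polynomial $p\rst\tau\in\F[x_1,\ldots,x_n]$; under this substitution the PCR axiom $x_i+\bar x_i-1$ becomes the zero polynomial, the Boolean axiom $x_i\cd\bar x_i$ becomes $x_i(1-x_i)$ which is a PC Boolean axiom, and each original axiom $p_{j}$ becomes $p_j\rst\tau$. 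The PCR inference rules commute with $\tau$ in the obvious way: an addition $a\cd p+b\cd q$ maps to $a\cd(p\rst\tau)+b\cd(q\rst\tau)$, a product $x_i\cd p$ maps to $x_i\cd(p\rst\tau)$, and a product $\bar x_i\cd p$ maps to $(1-x_i)\cd(p\rst\tau)=(p\rst\tau)-x_i\cd(p\rst\tau)$, which is obtained by one PC product step followed by one PC addition step. Hence the sequence $(p_1\rst\tau,\ldots,p_\ell\rst\tau)$, with the zero lines deleted and each $\bar x_i$-product expanded into two steps, is a legitimate PC proof of $p\rst\tau$ from $p_{j_1}\rst\tau,\ldots,p_{j_k}\rst\tau$, of length $O(s)$.

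The remaining point is the size measure: \fop~counts the total sizes of the \emph{ordered formulas} writing the proof-lines, not the number of monomials. So next I would argue that each $p_i\rst\tau$, being a polynomial of the same monomial-count complexity as $p_i$ up to the standard blow-up, can be written as an ordered $\Sigma\Pi$-style formula of size $O(n\cd s_i)$, where $s_i$ is the number of monomials in $p_i$. Concretely, a single monomial $\mathcal M$ in the $x$-variables is computed by an ordered formula of size $O(n)$ (multiply the variables in $\preceq$-increasing order, which is trivially a syntactic ordered formula since each successive variable is $\succeq$ the previous ones), and a sum of $t$ such monomials with coefficients is an ordered formula of size $O(n\cd t)$ — the root is a $+$-gate over the $t$ monomial subformulas, each of which is ordered, so the whole formula is ordered. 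Since the PCR size $s$ is by definition $\sum_i s_i$ (total monomial count over all proof-lines), summing $O(n\cd s_i)$ over the $O(s)$ lines of the translated proof gives total \fop~size $O(n\cd s)$, as claimed. The two-step expansion of $\bar x_i$-products only changes constants here.

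The main obstacle, and the step deserving the most care, is making sure the translated polynomials $p_i\rst\tau$ really do have $\Sigma\Pi$ representations of size $O(n\cd s_i)$ and that writing them as ordered formulas does not incur a further blow-up. The subtlety is that a PCR monomial may involve both $x_i$ and $\bar x_i$; after applying $\tau$ this becomes $x_i(1-x_i)$ inside a product of other variables, so a monomial in $2n$ variables can expand into up to $2^{(\text{number of }\bar x\text{'s in it})}$ monomials in the $x$-variables. I would handle this by \emph{not} fully expanding: instead keep each translated monomial as an ordered product in which a factor $x_i$ contributes the subformula $x_i$ and a factor $\bar x_i$ contributes the (ordered, size-$O(1)$) subformula $1-x_i$ inserted at the correct position in the $\preceq$-order; the product of these $\le 2n$ small ordered subformulas, taken in $\preceq$-order, is a syntactic ordered formula of size $O(n)$ computing the ordered version of $p_i\rst\tau$ restricted to that monomial. (Note $1-x_i=\llbracket 1-x_i\rrbracket$ is trivially ordered.) One must check this factor-by-factor product is genuinely ordered — it is, because all variables in the $x_i$-or-$(1-x_i)$ block precede all variables in the $x_j$-or-$(1-x_j)$ block whenever $x_i\prec x_j$ — and that the PIT-based verification of Proposition \ref{prop:OFPC_is_Cook_Reckhow} still accepts the translated proof, which it does since we have exhibited explicit ordered formulas realizing each PC step. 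Once this bookkeeping is in place the size bound $O(n\cd s)$ follows.
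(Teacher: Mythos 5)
Your proposal is correct and follows essentially the same route as the paper's proof: simulate each $\bar x_i$-product by a product-then-addition pair, push the substitution $\tau$ through the whole sequence, and write each translated monomial as an unexpanded ordered product of factors $x_i$ and $(1-x_i)$ of size $O(n)$, giving the $O(n\cdot s)$ bound. Your explicit treatment of the potential $2^k$ blow-up from monomials containing both $x_i$ and $\bar x_i$ is exactly the point the paper handles by keeping the product form rather than expanding into monomials.
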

\begin{proof}
Given some linear order on the variables, we assume that all ordered formulas respect this linear order (and so we do not refer explicitly to this order).

Let $ \pi=(p_1,\ldots,p_t) $ be a PCR proof of size $ s $ from the axioms $ p_{j_1},\ldots,p_{j_k} $ (that is, $ p_i $'s are [commutative] polynomials from $ \F[x_1,\ldots,x_n, \bar x_1,\ldots,\bar x_n] $, for some field $ \F $, such that the total number of monomials occurring in all proof-lines in $\pi $ is $ s $). We need to show that there is an \fop~proof $\pi'$ of $ p_i $ from the axioms, such that $\pi' $ has size $ O(n\cd s) $.

Let $ \Gamma $ be the sequence obtained from $ \pi $ by replacing every product rule application in $ \pi $, deriving $ \bar x_i \cd p $ from $ p $ (for any $ i=1,\ldots,n $), by the following proof sequence:
\begin{enumerate}[\ \ 1.]
\item $ p $
\item $ x_i \cd p $
\item $ (1-x_i)\cd p $
\end{enumerate}
(the second polynomial is derived by the product rule from the first polynomial, and the third polynomial is derived by the addition rule from the first and second polynomials).

Let $ \Gamma\rst\tau$ be the sequence obtained from $ \Gamma $ by applying the substitution $\tau $ on every proof-line in $ \Gamma $.
We claim that $\Gamma\rst\tau$ is a PC proof of $ p_t\rst\tau$ from the initial polynomials $ p_{j_1}\rst\tau,\ldots,p_{j_k}\rst\tau$:
first, note that all product rule applications using $ \bar x_i $ variables were eliminated in $ \Gamma\rst\tau$, and thus all product rule applications in $ \Gamma\rst\tau $ are legitimate PC product rule applications. Second, note that for any pair of polynomials $ g,h $ we have $ g\rst\tau+h\rst\tau = (g+h)\rst\tau $. Third, note that the axioms of PCR transform under $ \tau $ to either $ 0 $ (which we can ignore in the new proof sequence) or to the PC axiom $ x_i(1-x_i) $.

By construction, every proof-line in $ \Gamma\rst\tau $ is either $ p_i\rst\tau $ or $ x_j\cd (p_i\rst\tau) $, for some $ p_i\in\pi $ and $ j\in[n] $. Therefore, by definition of \fop, it suffices to show
that every $ p_i\rst\tau $ and $ x_j\cd (p_i\rst\tau) $, for some $ p_i\in\pi $ and $ j\in[n] $, have ordered formulas of size at most $ O(m\cd n) $, where $ m $ is the number of monomials in $ p_i $.
For this purpose it is enough to show that for every monomial $ \mathscr M $ in $ p_i $ there exists an $ O(n) $ ordered formula computing the polynomial $ \mathscr  M\rst\tau $.
The latter is true since every such polynomial is a product of at most $ n $ terms, where each term is either $ x_i $ or $ 1-x_i $, for some $ i\in[n] $; such a product can be clearly written as an ordered formula of size $ O(n)$.
\end{proof}

\subsubsection{\fop~polynomially simulates \RZ0}\label{sec:fop_sim_RZ0}
We now show that \fop~can polynomially simulate the proof system \RZ0 introduced in \cite{RT07}. This will be used in Section \ref{sec:cor_for_fop} to establish the \fop~upper bounds.
In that paper a refutation system \RL0 was introduced. \RL0 is a refutation system extending resolution to work with disjunctions of linear equations instead of disjunction of literals.
\RZ0 is defined to be a subsystem of \RL0 in which certain restrictions put on the possible
disjunctions of linear equations allowed in a proof. For the precise definition of \RL0 and \RZ0 we refer the reader to \cite{RT07}. However, it is not entirely necessary to know the definitions of \RL0 and \RZ0, since we will use a polynomial translation of \RZ0 defined below, and describe explicitly what is needed for the proofs ahead.

First, we need the definitions that follow.
A \emph{polynomial translation of a clause $ \BigOr_{j\in J}(x_j^{b_j}) $ } is a any product of the form $ \prod_{j\in J}(x_j-b_j) $, where $ b_j\in\zo $ for all $ j\in J $, and where $ x_j^{b_j} $ is the literal $ x_j $ if $ b_j=1 $ and $ \neg x_j $ if $ b_j=0 $. Accordingly, we define the \emph{polynomial translation of a CNF formula} as the set consisting of the polynomial translations of the clauses in a CNF.

\begin{definition}[Polynomial translation of \RCD0{c,d}-lines]\label{def-poly-trnas-of-R(lin)-lines}
A \emph{polynomial translation of an \RCD0{c,d}-line} is a product $ D=\prod_{j\in J} L_j $, where the $ L_j $'s are linear forms, and:
\begin{enumerate}
\item All variables in the linear forms have integer coefficients with absolute values at most $c$ (the constant terms are unbounded).

\item $ D $ can be written as $ \prod_{i=1}^d D_i$, where each $D_i$ either consists of (an unbounded) product of linear forms \emph{that
differ only in their constant terms}, or is a translation of a clause (as defined above).
\end{enumerate}
The \emph{width} of a polynomial-translation of an \RCD0{c,d}-line $D$ is defined to be the
total degree of the polynomial $ D $.
\end{definition}

In other words, any polynomial translation of an \RCD0{c,d}-line has the following general form:

\begin{equation}\label{eq-R0-clause}
    \prod_{j\in J}(x_j-b_j)  \cd
     \prod_{t=1}^k
    \prod_{i\in I_t}
            \left(
             \sum_{r=1}^n a_r^{(t)} x_r -\ell^{(t)}_i
        \right)
\,,
\end{equation}
where $k\le d$  and  for all $r\in[n]$ and $t\in[k]$, $a^{(t)}_r$ is
an integer such that $|a^{(t)}_r|\le c$,
and $b_j\in\zo$ (for all $j\in J$) (and $I_1\CommaDots I_k, J$ are
unbounded sets of indices).
Clearly, a disjunction of clauses is a clause in itself, and so
we can assume that in any \RCD0{c,d}-line
only a single polynomial translation of a clause occurs.

We shall use the following propositions:

\begin{proposition}[Algebraic translation of \RZ0; Corollary 9.11 \cite{RT07} (restated)]
\label{cor-PCR-RZ0-lines-translation}
Let $K:= \set{K_n\such n\in\nat}$ be a family of unsatisfiable CNF formulas\footnote{Formally, we have a straightforward translation of CNFs to the language of \RZ0 (see \cite{RT07}).},
and let $\set{P_n\such n\in\nat}$ be a family of \emph{\RZ0-proofs} of $K$.
Then, there are two constants $c,d$ that do not depend on $ n $ and
a family of PC proofs $\set{P'_n\such n\in\nat} $ of the polynomial translations of the family of CNFs $ K $, such that for every $ n $ the proof $ P'_n $ has polynomial-size in the size of $P_n$ number of steps, and where every line in $P'_n$ is a (polynomial translation of an) \RCD0{c,d}-line (Definition \ref{def-poly-trnas-of-R(lin)-lines}) whose width is polynomial in the size of $ P_n $.
\end{proposition}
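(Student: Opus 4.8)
This is essentially Corollary~9.11 of \cite{RT07}, restated for PC, so I only sketch the translation argument. Recall that a line of \RZ0\ is a disjunction of linear equations $(L_1=a_1)\lor\cdots\lor(L_k=a_k)$, and its \emph{polynomial translation} is the product $\prod_{i=1}^{k}(L_i-a_i)$ of the corresponding linear forms. By the definition of \RZ0\ in \cite{RT07}, every line of $P_n$ already has the restricted shape of Definition~\ref{def-poly-trnas-of-R(lin)-lines} for two constants $c,d$ that depend only on the syntactic parameters of \RZ0\ (which are absolute constants) and not on $n$: it splits into at most $d$ blocks of the allowed kinds, and every variable-coefficient of every linear form occurring in it has absolute value at most $c$. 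Hence the translation of an \RZ0-line is automatically a polynomial translation of an \RCD0{c,d}-line, and its width---the total degree of the product---equals the number $k$ of disjuncts, which is at most $|P_n|$ and so polynomial. Since the refutation ends with the empty disjunction (denoting \textsf{false}), whose translation is the empty product $1$, the sequence $P'_n$ we construct will automatically be a PC-refutation of the polynomial translations of $K_n$.

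For the initial lines the translation is immediate: a Boolean axiom $(x_i=0)\lor(x_i=1)$ of \RZ0\ translates to $x_i(x_i-1)$, a scalar multiple of the PC Boolean axiom, and a CNF-clause axiom translates exactly to one of the initial polynomials of $P'_n$.

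Next I would handle each inference rule of \RZ0---a resolution/linear-combination rule, a weakening rule, and the simplification rule deleting a falsified equation $(0=a)$ with $a\ne 0$ (see \cite{RT07} for the exact formulations)---by a short PC-derivation of the translated conclusion from the translated premises. The operations needed are only: (a)~multiplying a proof-line by a fixed polynomial that is a product of linear forms with coefficients bounded by $c$---this is how weakening acts on translations, and how the two premises of a resolution step are ``lined up'' into a common product; it is carried out by multiplying the line by each monomial of that polynomial (each monomial obtained from the constant $1$ by a bounded number of product-rule applications) and then combining the results with the addition rule, all in polynomially many PC-steps; (b)~taking a prescribed scalar-linear combination of two lines, one application of the addition rule; and (c)~dividing a line by a nonzero element of $\F$, again one application of the addition rule (legal since PC works over a field). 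Each \RZ0-rule amounts to a constant number of these, costing polynomially many PC-steps, and there are at most $|P_n|$ rules, so $P'_n$ has polynomially many steps. Throughout, every intermediate line is a product of the translation of an \RZ0-line by a monomial---which may be regarded as a product of coefficient-$1$ linear forms $x_r$---times possibly a scalar; hence the coefficients of its linear forms still do not exceed $c$, its width exceeds the widths of the relevant \RZ0-lines by only $O(1)$ (so is polynomial), and its number of blocks exceeds $d$ by only $O(1)$ before dropping back to $\le d$ at the next genuine \RZ0-line translation. Replacing $c,d$ by these slightly larger constants---still independent of $n$---gives the claimed bound.

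The step I expect to be the main obstacle is exactly the bookkeeping in the previous paragraph: checking that multiplying a product of linear forms by a single variable $x_r$ via the PC product rule does not destroy the block decomposition of Definition~\ref{def-poly-trnas-of-R(lin)-lines}, and that no rule ever produces a linear form whose coefficients outgrow a single bound. The first point is handled by absorbing the new factor $x_r$ into (or adjoining it as) a monomial/clause block, together with the observation that only $O(1)$ such multiplications occur between two consecutive \RZ0-line translations. The second point is precisely where the restriction distinguishing \RZ0\ from full \RL0\ is used---\RZ0's syntactic constraints are designed so that its linear-combination rule cannot accumulate large coefficients---and this is what makes the constants $c$ and $d$ exist and be readable off from the definition of \RZ0\ in \cite{RT07}.
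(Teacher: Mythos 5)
The first thing to note is that the paper does not prove this proposition at all: it is imported verbatim from \cite{RT07} (it is Corollary~9.11 there, restated), and the only content this paper adds is the subsequent Note that the PCR formulation in \cite{RT07} also yields the PC formulation. So your text is a reconstruction of the argument of \cite{RT07}, not something that can be matched against a proof given here; citing the corollary, as the paper does, would already suffice.

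Judged as a reconstruction, the sketch has a concrete gap precisely at the point you flag as ``the main obstacle,'' and your proposed implementation of step~(a) would not work as stated. To simulate the addition/resolution rule of \RL0 one must multiply the translation of one premise by the entire product of linear forms translating the other premise's side disjunction; that multiplier is the translation of an \RCD0{c,d}-line of width polynomial in $|P_n|$, so (i) it can have exponentially many monomials, hence your ``multiply by each monomial and re-add'' recipe does not give polynomially many PC-steps, (ii) each such monomial needs up to the width many product-rule applications, not $O(1)$, and (iii) the partial sums of monomial-multiples arising in the re-adding are in general not products of linear forms, hence not polynomial translations of \RCD0{c',d'}-lines---which is exactly the property the proposition asserts for \emph{every} line of $P'_n$. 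The argument in \cite{RT07} avoids this by multiplying one linear form at a time (and, inside a linear form, one variable at a time), so that every intermediate line is a product of linear forms, and the real technical work is verifying that these intermediates retain the block structure and the coefficient bound of Definition~\ref{def-poly-trnas-of-R(lin)-lines} with constants independent of $n$; your sketch defers this to the reference. Two smaller inaccuracies: the constants $c,d$ are not absolute constants of the syntax of \RZ0 but the constants witnessing that the given family of proofs lies in \RZ0 (independent of $n$, fixed per family); and the claim that only $O(1)$ product-rule applications occur between two consecutive translated \RZ0-lines is off by a polynomial factor. Finally, the contraction/simplification steps of \RL0 (e.g.\ merging duplicated disjuncts), which in the simulation require the Boolean axioms, are not covered by your operations (a)--(c).
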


\begin{remark}
It is immaterial to define the size measure for \RZ0 refutations (though this concept is mentioned in Theorem \ref{cor-PCR-RZ0-lines-translation}); we shall only use the fact that \RZ0 has short refutations for some hard contradictions.
\end{remark}

\begin{note}
Although corollary 9.11 in \cite{RT07} is stated for PCR instead of PC, the translation holds also for PC
(see Remark before Corollary 9.11 in \cite{RT07}).
\end{note}

\begin{definition}[Multilinearization operator]\label{def-ml-operator}
Given a field $\;\mathbb{F}$ and a polynomial $q\in
\mathbb{F}[x_1,\ldots,x_n]$, we denote by $\ML{q}$ the unique multilinear
polynomial equal to $q$ modulo the ideal generated by all the polynomials
$\,x_i^2-x_i$, for all variables $x_i$.
\end{definition}

For example, if $q=x_1^2x_2+a x_4^3\,$ (for some $a\in \mathbb{F}$)
then $\,\ML{q}=x_1x_2+a x_4\,$. \bigs

\begin{proposition}[Implicit in \cite{RT06,RT07}]\label{prop:implicit_in}
Let $ P $  be a PCR refutation from initial multilinear polynomials. Then we can transform $ P$  into a new PCR refutation $ P' $ from the same initial multilinear polynomials such that $ P' $ contains only multilinear polynomials, with only a polynomial increase in the number of steps. Moreover, if the proof lines in $ P $ are all \RCD0{c,d}-lines of maximal width $ w $, then all the proof lines in $ P'$  are multilinearizations of \RCD0{c',d'}-lines of maximal  width polynomial in $ w $ and where $ c',d' $ depend only on $ c,d $.
\end{proposition}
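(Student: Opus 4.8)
\emph{Proof idea.} The plan is to build $P'$ by applying the multilinearization operator $\ML{\cdot}$ of Definition~\ref{def-ml-operator} to every line of $P$ and then splicing in short auxiliary PCR sub-derivations so that the resulting sequence is again a legal PCR refutation. Because $\ML{\cdot}$ is the identity on multilinear polynomials, the initial polynomials (multilinear by hypothesis) and the PCR Boolean axioms $x_i\cd\bar x_i$ and $x_i+\bar x_i-1$ are left unchanged, and the conclusion $1$ stays $1$; so the only thing to redo is the derivation of the conclusions of the two inference rules from their (now multilinearized) premises.

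The addition rule is immediate: $\ML{a\cd p+b\cd q}=a\cd\ML p+b\cd\ML q$, so one application of addition reproduces the multilinearized conclusion, and since every line of $P$ is an $\RCD0{c,d}$-line the polynomial $a\cd p+b\cd q$ equals some $\RCD0{c,d}$-line, hence its multilinearization is again a multilinearization of an $\RCD0{c,d}$-line. The product rule $p\mapsto x_r\cd p$ (and likewise $p\mapsto\bar x_r\cd p$) is the substantive case: one must derive $\ML{x_r\cd p}=\ML{x_r\cd\ML p}$ from $\ML p$ by genuine PCR steps all of whose intermediate lines are multilinear. The tool is a re-multilinearization gadget, as in \cite{RT06,RT07}: using the axiom $x_r^2-x_r$ (derivable in $O(1)$ PCR steps by multiplying $x_r+\bar x_r-1$ by $x_r$ and subtracting $x_r\cd\bar x_r$) together with its multiples by monomials, one removes the repeated occurrence of $x_r$ produced by the product step, one monomial at a time, landing on $\ML{x_r\cd\ML p}$. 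The final line $x_r\cd p$ is itself a line of $P$ and hence already an $\RCD0{c,d}$-line, so its multilinearization is automatically a multilinearization of an $\RCD0{c,d}$-line; what needs checking is that the intermediate partially remultilinearized lines of the gadget, and the monomial multiples of $x_r^2-x_r$ used along the way, all stay multilinear and all stay multilinearizations of $\RCD0{c',d'}$-lines of controlled width.

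Collecting parameters: each product step of $P$ is replaced by $O(1)$ product steps plus $\mathrm{poly}(w)$ further steps (one addition per monomial to be corrected, plus the $O(w)$ product steps needed to assemble each required multiple of $x_r^2-x_r$), so the number of steps grows at most polynomially; the constants come out as $c'=c$ and $d'=d+O(1)$, the pre-multilinearization width grows by a bounded amount per step and hence stays polynomial in $w$ along the whole (polynomially long) proof, and multilinearization caps the actual degree of every line at $n$. I expect the one genuinely delicate point — the place where the argument is more than bookkeeping — to be organizing the re-multilinearization after each product step so that no non-multilinear polynomial, and nothing outside the class of multilinearizations of $\RCD0{c',d'}$-lines of width $\mathrm{poly}(w)$, is ever written down as a line of $P'$; this is precisely the calculation that is already (implicitly) carried out in \cite{RT06,RT07}.
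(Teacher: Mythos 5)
Your high-level plan is the same as the paper's: multilinearize every line of $P$ and splice in auxiliary derivations so that the result is again a PCR refutation, with the product rule as the only substantive case (the paper does exactly this, but delegates the patching wholesale to the proof of Theorem 5.1 in \cite{RT06} and then checks that the added lines are multilinearizations of \RCD0{c',d'}-lines). However, the concrete gadget you propose for the product step cannot work as stated, and the failure is not the ``delicate bookkeeping'' you defer to \cite{RT06,RT07} --- it is a contradiction with the very conclusion you are proving. Your gadget derives $x_r^2-x_r$ (whose derivation from $x_r+\bar x_r-1$ and $x_r\cd\bar x_r$ already passes through lines containing $x_r^2$), multiplies it by monomials, and subtracts these from $x_r\cd\ML{p}$ one monomial at a time. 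Every one of these objects --- the line $x_r\cd\ML{p}$ itself when $x_r$ occurs in $\ML{p}$, the monomial multiples $m\cd(x_r^2-x_r)$, and all the partially corrected intermediate polynomials --- contains the square $x_r^2$, hence is \emph{not} multilinear. So the $P'$ you build does not consist only of multilinear polynomials, which is the main assertion of the proposition; the sentence ``what needs checking is that [these] all stay multilinear'' is not a check that can succeed for this gadget.

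The missing idea is that the re-multilinearization must be carried out \emph{inside} the multilinear world, which is precisely what the fMC machinery of \cite{RT06} (Theorem 5.1, the result the paper actually invokes) provides: there one never writes $x_r^2$ at all, but instead uses the twin variables and the PCR axioms $x_r\cd\bar x_r$ and $x_r+\bar x_r-1$ so that every correction term appearing as a proof line is itself multilinear (roughly, the discrepancy $\ML{x_r\cd p}-x_r\cd p$ is expressed through multiples of the multilinear axioms rather than through multiples of $x_r^2-x_r$). Until the gadget is replaced by such a construction, the parameter claims ($c'=c$, $d'=d+O(1)$, width polynomial in $w$) have nothing to attach to; and note that the ``moreover'' part of the proposition also has to be verified for exactly those auxiliary multilinear lines, which is the inspection of the proof of Theorem 5.1 that the paper's sketch points to.
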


\begin{proofsketch}
Given a PCR proof $ P =(p_1,\ldots,p_m)$ in the variables $ \set{x_1,\ldots,x_n,\bar x_1,\ldots,\bar x_n} $, consider the sequence $ S $ of multilinearized polynomials $ (\ML{p_1},\ldots,\ML{p_m}) $.
Then, by the proof of Theorem 5.1 in \cite{RT06} one can add polynomially in $ m $ many multilinear polynomials to $ S $ so that the new sequence $ S' $ consists of only multilinear polynomials and constitutes a PCR refutation of the initial polynomials. (Theorem 5.1 from \cite{RT06} talks about fMC refutations [Definition 2.6 in \cite{RT06}]. However, it is clear from the definition of fMC that the underlying sequence of polynomials in any fMC refutation constitutes a PCR refutation as well.)

Assume in addition that all polynomials in $ P $ are polynomial translations of \RCD0{c,d}-lines (Definition \ref{def-poly-trnas-of-R(lin)-lines}). Then, $ S =(\ML{p_1},\ldots,\ML{p_m}) $ is a sequence of multilinearizations of \RCD0{c,d}-lines. The only thing left to check is that the additional polynomials added to $ S $ to yield $ S' $ in the proof of Theorem 5.1 \cite{RT06} \emph{are all polynomial translations of \RCD0{c',d'}-lines}, where $ c',d' $ depend only on $ c,d $.
This could be done by straightforward inspection of the proof of Theorem 5.1 \cite{RT06}.
\end{proofsketch}

Now we are ready to prove the main simulation of this subsection:

\begin{theorem}\label{thm:FOP_sim_RZ0}
For any linear order on the variables, \fop~polynomially simulates \RZ0 (over large enough fields). Moreover, we can assume that all formulas appearing in the \fop~proofs simulating \RZ0 are depth-$3$ ordered formulas.
\end{theorem}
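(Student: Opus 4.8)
The plan is to combine the algebraic translation of \RZ0~(Proposition~\ref{cor-PCR-RZ0-lines-translation}) with the multilinearization procedure (Proposition~\ref{prop:implicit_in}), and then to show that each line of the resulting proof is computed by a small \emph{depth-$3$ ordered} formula; since the size of an \fop~proof is, by definition, the total ordered-formula size of its lines, this yields the simulation. In detail: from an \RZ0~refutation of a CNF $K$ of size $s$, Proposition~\ref{cor-PCR-RZ0-lines-translation} gives a PC proof of the polynomial translation of $K$ with $\mathrm{poly}(s)$ steps, each line a polynomial translation of an \RCD0{c,d}-line of width $w=\mathrm{poly}(s)$, with $c,d$ constants independent of $n$. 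Applying Proposition~\ref{prop:implicit_in} (which applies to PC proofs as the special case with no $\bar x$ variables), we may further assume that every line is the multilinearization $\ML{D}$ of an \RCD0{c',d'}-line $D$ of width $\mathrm{poly}(w)$, with $c',d'$ again constant. It therefore suffices to prove the following.

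\medskip\noindent\emph{Key claim.} Over a sufficiently large field (e.g.\ any field of characteristic $0$, or any field of size exceeding a fixed polynomial in the proof size), if $D$ is an \RCD0{c,d}-line of width $w$ with $c,d$ fixed, then $\llbracket\ML{D}\rrbracket$ is computed by an ordered $\Sigma\Pi\Sigma$ formula of size $\mathrm{poly}(n,w)$.

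\medskip\noindent To prove the claim, write $D=\prod_{j\in J}(x_j-b_j)\cd\prod_{t=1}^d g_t\!\big(L^{(t)}\big)$, where $L^{(t)}=\sum_r a^{(t)}_r x_r$ with $|a^{(t)}_r|\le c$, $b_j\in\zo$, and where the $t$-th block is a \emph{univariate} polynomial $g_t$ of degree $\le w$ evaluated at the single linear form $L^{(t)}$ --- this is precisely the structural feature of Definition~\ref{def-poly-trnas-of-R(lin)-lines}, namely that within a block the linear forms differ only in their constant terms. On $\zo^n$ each $L^{(t)}$ takes integer values in $[-cn,cn]$, a range of size $O(n)$; hence, modulo the Boolean ideal, $\prod_{t=1}^d g_t(L^{(t)})$ equals $\sum_{\vec v}\big(\prod_t g_t(v_t)\big)\cd\mathbf 1[L^{(1)}=v_1,\dots,L^{(d)}=v_d]$, a sum of $O(n)^d=\mathrm{poly}(n)$ terms because $d$ is constant. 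Each joint indicator $\mathbf 1[\vec L=\vec v]$ is, on $\zo^n$, the $\prod_u z_u^{v_u}$-coefficient of $\prod_u z_u^{L^{(u)}}=\prod_{r=1}^n\big(1+(\prod_u z_u^{a^{(u)}_r})x_r\big)\,\big|_{x\in\zo^n}$, and since the latter product is already multilinear in $x$, extracting that coefficient amounts to inverting a (tensor of) Vandermonde-type matrices, the $z_u$ being evaluated at $2cn{+}1$ distinct nonzero field elements (with negative exponents handled over Laurent polynomials); one gets $\mathbf 1[\vec L=\vec v]=\sum_{\vec s}\mu_{\vec v,\vec s}\prod_{r=1}^n(1+\eta^{(\vec s)}_r x_r)$ as polynomials, where $\vec s$ ranges over $\mathrm{poly}(n)$ tuples of evaluation points, the $\eta^{(\vec s)}_r$ are field elements, and the $\mu_{\vec v,\vec s}$ are the entries of the inverse matrix (this is the only place ``large enough field'' is used). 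Collecting, $\ML{\prod_t g_t(L^{(t)})}=\sum_{\vec s}\Gamma_{\vec s}\prod_{r=1}^n(1+\eta^{(\vec s)}_r x_r)$. Finally, $\ML{D}=\sum_{\vec s}\Gamma_{\vec s}\,\ML{\prod_{j\in J}(x_j-b_j)\cd\prod_{r=1}^n(1+\eta^{(\vec s)}_r x_r)}$, and since the clause factor touches each variable at most once, grouping factors by variable and multilinearizing gives
\[
\llbracket\ML{D}\rrbracket=\sum_{\vec s}\Gamma_{\vec s}\cd\prod_{r=1}^n \widetilde{M}^{(\vec s)}_r(x_r),
\]
where each $\widetilde{M}^{(\vec s)}_r$ is a linear form in the single variable $x_r$ (for $r\in J$ it is $\ML{(x_r-b_r)(1+\eta^{(\vec s)}_r x_r)}$, otherwise $1+\eta^{(\vec s)}_r x_r$) and the product over $r$ is taken in increasing order with respect to $\preceq$. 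This formula has size $\mathrm{poly}(n,w)$, has depth $3$ (a $\Sigma\Pi\Sigma$ formula), and is ordered: each $\widetilde{M}^{(\vec s)}_r$ has degree $\le 1$ and so is ordered, while every prefix $\prod_{r'\preceq r}\widetilde{M}^{(\vec s)}_{r'}(x_{r'})$ computes a sum of ordered monomials supported on $\{x_{r'}:r'\preceq r\}$, so every subformula of the product computes an ordered polynomial.

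\medskip\noindent The main obstacle is the bookkeeping inside the Key claim: making the generating-function / interpolation argument precise (handling the possibly negative exponents $a^{(t)}_r$, and checking that $\ML{\cdot}$ commutes both with the interpolation sum and with grouping the clause factor by variable), and --- crucially --- keeping everything polynomial, which works \emph{only because $d$ is the fixed constant} supplied by Proposition~\ref{cor-PCR-RZ0-lines-translation}: were $d$ allowed to grow with $n$, the $O(n)^d$ expansion would already be super-polynomial. The reduction to depth $3$ genuinely relies on the multilinearization step, since $\llbracket\ML{D}\rrbracket$ is being expressed as a sum of products of \emph{linear} forms; once the Key claim is in hand, the routing through Propositions~\ref{cor-PCR-RZ0-lines-translation} and~\ref{prop:implicit_in} is routine.
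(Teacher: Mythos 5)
Your proposal is correct and follows the same skeleton as the paper: route the \RZ0~refutation through Propositions \ref{cor-PCR-RZ0-lines-translation} and \ref{prop:implicit_in}, then reduce everything to showing that the multilinearization of a width-$w$ \RCD0{c,d}-line has a polynomial-size ordered $\Sigma\Pi\Sigma$ formula over a large enough field---your ``Key claim'' is exactly the paper's Lemma \ref{lem:ordered_formula_for_RCD-line}. Where you genuinely differ is in how that lemma is proved. The paper imports two claims from \cite{RT07}: Claim 9.15, which rewrites the line as a linear combination of $(w+c)^{c\cd d}$ terms of the form $q\cd\prod_{k\in K} z_k^{r_k}$ with each $z_k$ a linear form having a single coefficient, and Claim 9.16, which interpolates each such term into a depth-$3$ formula whose product gates multiply single-variable linear forms; orderedness is then observed exactly as you do, from the fact that univariate factors can be multiplied in the order $\preceq$. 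You instead give a self-contained argument: view each block as a univariate $g_t$ composed with a linear form $L^{(t)}$, expand over the $O(n)^d$ joint value vectors of $(L^{(1)},\ldots,L^{(d)})$ on $\zo^n$, and recover the joint indicators by extracting Laurent coefficients of $\prod_u z_u^{L^{(u)}}$ via $d$-dimensional interpolation at distinct nonzero points, which directly yields a polynomial-size sum of products of univariate affine factors; as you note, the constancy of $d$ is what keeps both the value-vector expansion and the evaluation grid polynomial, and your field requirement (enough distinct nonzero evaluation points) matches the paper's ``fields of size bigger than $w\cd n$''. Two cosmetic points: the Boolean generating-function factor should be $1+\bigl(\prod_u z_u^{a_r^{(u)}}-1\bigr)x_r$ rather than $1+\bigl(\prod_u z_u^{a_r^{(u)}}\bigr)x_r$ (harmless, since after evaluation it is still of the form $1+\eta_r x_r$), and the displayed identity for $\mathbf{1}[\vec L=\vec v]$ should be read as an identity of functions on $\zo^n$, equivalently of their multilinearizations, which is all you use. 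What your route buys is independence from the internals of \cite{RT07} and an explicit normal form (sum of products of univariate affine factors, ordered under any variable order); what the paper's route buys is brevity, by reusing the decomposition of \RCD0{c,d}-lines that \cite{RT07} already established.
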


By Propositions \ref{cor-PCR-RZ0-lines-translation} and \ref{prop:implicit_in} and by the definition of \fop, in order
to prove Theorem \ref{thm:FOP_sim_RZ0}
it suffices to prove the following lemma (implicit in \cite{RT07}):

\begin{lemma}[Implicit in Lemma 9.14 \cite{RT07}]\label{lem:ordered_formula_for_RCD-line}
Let $ p $ be a polynomial translation of an \RCD0{c,d}-line of width $ w $ over $ n $ variables.
Then, $\ML{p} $ can be computed by an ordered formula of size polynomial in $ w \cd n $
over fields of size bigger than $\, w\cd n $. Moreover, the ordered formula is a $\Sigma\Pi\Sigma$ formula\footnote{This means that every path from the root to the leaf in the formula tree starts with a plus gate, and the number of alternation in the path between plus and product gates is at most two}.
\end{lemma}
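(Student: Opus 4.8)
\smallskip
\noindent\emph{Proof plan.}
The plan is to put $p$ into the normal form of \eqref{eq-R0-clause}, treat each of the $\le d$ linear-form blocks by a single substitution-plus-interpolation trick that turns ``multilinearize a polynomial in a full-support linear form'' into ``interpolate an auxiliary variable'', and finally fold the clause factor into the resulting product modulo the Boolean ideal so that no variable survives to a power larger than $1$. By Definition~\ref{def-poly-trnas-of-R(lin)-lines} and \eqref{eq-R0-clause} we may write $p=\big(\prod_{j\in J}(x_j-b_j)\big)\cd\prod_{t=1}^{k}g_t\!\big(L^{(t)}\big)$, where $k\le d$, each $L^{(t)}=\sum_{r=1}^{n}a^{(t)}_r x_r$ has integer coefficients with $|a^{(t)}_r|\le c$, each $g_t(y):=\prod_{i\in I_t}(y-\ell^{(t)}_i)$ is a univariate polynomial, and $|J|,|I_t|\le w$. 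Over $\zo^n$ every $L^{(t)}$ takes values in the integer interval $[-cn,cn]$, a set of at most $2cn+1$ values. (Throughout, $\equiv$ denotes congruence modulo the Boolean ideal generated by the $x_i^2-x_i$.)

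The key step works in the ring of Laurent polynomials in fresh variables $z_1,\dots,z_k$ over $\F[x_1,\dots,x_n]$: I would use the congruence
\[
   \prod_{r=1}^{n}\Big((1-x_r)+\big(\textstyle\prod_{t=1}^{k}z_t^{\,a^{(t)}_r}\big)\,x_r\Big)\ \equiv\ \prod_{t=1}^{k}z_t^{\,L^{(t)}(x)},
\]
which holds because for $x_r\in\{0,1\}$ the $r$-th factor equals $\prod_t z_t^{\,a^{(t)}_r x_r}$. After multiplication by $\prod_t z_t^{\,cn}$ the left-hand side becomes a genuine polynomial of degree $\le 2cn$ in each $z_t$, still congruent to $\prod_t z_t^{\,cn+L^{(t)}(x)}$; and the $\F$-linear functional sending $\prod_t z_t^{\,v_t+cn}$ to $\prod_t g_t(v_t)$ is, in each $z_t$ separately, realized by Lagrange interpolation at $2cn+1$ distinct nonzero elements $\beta_0,\dots,\beta_{2cn}\in\F$. (This is all the hypothesis $|\F|>w\cd n$ is needed for: when $w\ge 2c+1$ we have $|\F|>2cn$, and when $w\le 2c$ the polynomial $p$ has constant degree, so $\ML p$ has polynomially many monomials and is already a trivial ordered $\Sigma\Pi\Sigma$ formula over any field.) Applying this functional to the displayed congruence gives
\[
   \prod_{t=1}^{k}g_t\!\big(L^{(t)}(x)\big)\ \equiv\ \sum_{\vec l\in\{0,\dots,2cn\}^{k}}\Big(\textstyle\prod_{t=1}^{k}\nu^{(t)}_{l_t}\Big)\,\prod_{r=1}^{n}\Big((1-x_r)+\gamma_{\vec l,r}\,x_r\Big),
\]
with interpolation coefficients $\nu^{(t)}_{l}\in\F$ and $\gamma_{\vec l,r}=\prod_{t}\beta_{l_t}^{\,a^{(t)}_r}\in\F$.

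Next I would multiply by $\prod_{j\in J}(x_j-b_j)$ and reduce, inside each summand, factor by factor modulo $x_j^2-x_j$: the factors with $r\notin J$ are unchanged, while for $j\in J$ one computes $x_j\big((1-x_j)+\gamma_{\vec l,j}x_j\big)\equiv\gamma_{\vec l,j}x_j$ (if $b_j=0$) and $(x_j-1)\big((1-x_j)+\gamma_{\vec l,j}x_j\big)\equiv x_j-1$ (if $b_j=1$). Thus every summand reduces to $\Psi_{\vec l}:=\prod_{r=1}^{n}\psi_{\vec l,r}(x_r)$, a product of $n$ linear forms, the $r$-th one in the single variable $x_r$. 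The polynomial $F:=\sum_{\vec l}\big(\prod_t\nu^{(t)}_{l_t}\big)\Psi_{\vec l}$ is multilinear and $\equiv p$, so $F=\ML p$ by uniqueness of the multilinearization. Writing, inside each summand, the $n$-ary product gate with children $\psi_{\vec l,1},\dots,\psi_{\vec l,n}$ in the order $x_1\prec\cdots\prec x_n$ makes every subformula compute an ordered polynomial, so $F$ is a syntactic ordered $\Sigma\Pi\Sigma$ formula; it has at most $(2cn+1)^{k}\le(2cn+1)^{d}$ product gates, each feeding $n$ linear forms of size $O(1)$, hence total size $O(n^{d+1})$, polynomial in $w\cd n$.

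The one genuinely non-routine ingredient is the substitution $\prod_r((1-x_r)+z^{a_r}x_r)\equiv z^{L(x)}$: it trades the problematic multilinearization of a polynomial in a full-support linear form for a one-dimensional interpolation, and it is exactly what forces the bottom linear forms of the final formula to be supported on single variables---which is why that formula is ordered essentially for free. The points that will need care are (i) handling negative coefficients $a^{(t)}_r$, which forces one to work with Laurent polynomials and to clear denominators by $\prod_t z_t^{\,cn}$, hence to choose the interpolation points nonzero; and (ii) folding in the clause without reintroducing squares, which is what the three-case reduction above achieves, after which the check that the resulting depth-$3$ formula is syntactically ordered is immediate. The remaining work---tracking interpolation coefficients and bounding sizes---is bookkeeping.
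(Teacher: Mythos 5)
Your proof is correct, but it takes a more self-contained route than the paper's. The paper's argument is essentially a citation argument: it invokes Claim 9.15 of \cite{RT07} to decompose the \RCD0{c,d}-line into $(w+c)^{c\cd d}$ terms of the form $q\cd\prod_{k\in K} z_k^{r_k}$, where each $z_k$ is a homogeneous linear form with a single integral coefficient and $q$ is a clause translation, and then Claim 9.16 of \cite{RT07} (an interpolation argument) to compute the multilinearization of each such term by a $\Sigma\Pi\Sigma$ formula whose bottom linear forms are univariate; the only new observation in the paper is that univariate bottom linear forms make the formula ordered for free under any variable order. You bypass both cited claims: introducing one fresh variable $z_t$ per linear-form block, using the congruence $\prod_r\bigl((1-x_r)+\prod_t z_t^{a^{(t)}_r}x_r\bigr)\equiv\prod_t z_t^{L^{(t)}(x)}$ modulo the Boolean ideal (with the Laurent shift by $\prod_t z_t^{cn}$ to absorb negative coefficients), and expressing the functional $z_t^{v+cn}\mapsto g_t(v)$ as a combination of evaluations at $2cn+1$ distinct nonzero points (a transposed-Vandermonde step), you handle all blocks and arbitrary coefficients $|a^{(t)}_r|\le c$ in one stroke, and you fold in the clause by the local reductions $x_j\bigl((1-x_j)+\gamma x_j\bigr)\equiv\gamma x_j$ and $(x_j-1)\bigl((1-x_j)+\gamma x_j\bigr)\equiv x_j-1$. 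Both routes land in the same place---a sum of products of univariate affine forms, hence a syntactically ordered $\Sigma\Pi\Sigma$ formula---and both have the same character of bounds: exponential in $d$ (respectively $c\cd d$) but polynomial in $w\cd n$ for the constant $c,d$ with which the lemma is applied. What the paper's route buys is brevity, by outsourcing the combinatorial decomposition to \cite{RT07}; what yours buys is a self-contained proof with an explicit size bound $O((2c+1)^d n^{d+1})$ and a cleaner treatment of general (signed) coefficients, at the price of needing the small case split on the field hypothesis (your $w\ge 2c+1$ versus constant-degree $w\le 2c$ dichotomy, where the hypothesis $|\F|>w\cd n\ge 2cn+n$ indeed supplies the $2cn+1$ distinct nonzero interpolation points) and of binarizing the $n$-ary product gates, both of which are routine. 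I see no gap.
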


\begin{proof}
The proof uses the fact that \RCD0{c,d}-lines are close to a product of $ d $ symmetric polynomials, and the fact that symmetric polynomials can be computed by small ordered formulas (of depth-$3$) over large enough fields. Specifically:

\begin{claim}[Restatement of Claim 9.15 in \cite{RT07}]\label{cla-tedious-summands}
Let $ D $ be a polynomial translation of an \RCD0{c,d}-line of width $ w $.
Then, $ D$ is a linear combination (over $\F$) of $(w+c)^{c\cd d}$ many terms,
such that each term is of degree at most $ w $ and can be written as
\begin{equation}
\label{eq:the_zk_terms}
    q\cd\prod_{k\in K} z_k^{r_k}\,,
\end{equation}
where $K$ is a collection of indices such that $ |K|\le c\cd d  $,
and $r_k$'s are non-negative integers $\le w $,
and the $z_k$'s are homogenous linear forms such that each $z_k$ has a single integral coefficient for all variables
in it\footnote{That is, $ z_k = b\cd\sum_{j=1}^l x_{i_j} $ for some natural number $ b $.}, and  $q$
is a polynomial translation of a clause.
\end{claim}

By this claim, to complete the proof of Lemma \ref{lem:ordered_formula_for_RCD-line}
it is sufficient to show that the multilinearization of any term as in (\ref{eq:the_zk_terms}):
\begin{equation}\label{eq:multilinear_of_zk_products}
\ML{q\cd\prod_{k\in K} z_k^{r_k}}\,
\end{equation}
can be computed by an ordered $\Sigma\Pi\Sigma$ formula of size polynomial in $ cdn $, over fields of size bigger than $c\cd w $.
This is done by using polynomial interpolation, as shown (implicitly) in Claim 9.16 in \cite{RT07}. More specifically,  Claim 9.16 in \cite{RT07} demonstrated that (\ref{eq:multilinear_of_zk_products}) can be computed by a formula $ \Phi $ such that:
(i) $\Phi $ consists of polynomially in $ d, c $ many summands; (ii) each of these summands
is a depth-$3$ $\Sigma\Pi\Sigma$ formula, in which every product gate is a product of linear forms; (iii) and \emph{each of these
linear forms consists of only a single variable}.

\textit{Note that any such formula $ \Phi $ is also an ordered formula}, since the products are of linear forms, each of a single variable, one can order the products in a way that respects the underlying variable order $ \preceq $.
\end{proof}

\subsubsection{Corollaries: short proofs and separations}\label{sec:cor_for_fop}
For natural numbers $ m>n $, denote by $\neg \textrm{FPHP}^m_n$ the following unsatisfiable  collection of polynomials:
\begin{equation}\label{eq037}
\begin{array}{ll}
 {\rm{Pigeons:}} & \forall i \in [m],\,\,\,(1- x_{i,1})  \cdots (1-x_{i,n}) \\
 {\rm{Functional:}} & \forall i \in [m]\,\forall k<\ell \in [n],\,\,\,x_{i,k}\cd x_{i,\ell}\\
 {\rm{Holes:}} & \forall i < j \in [m]\,\forall k \in [n], \,\,\,\,x_{i,k}\cd x_{j,k} \\
 \end{array}
\end{equation}\bigskip

As a corollary of the polynomial simulation of \RZ0 by \fop, and the upper bounds on \RZ0 proofs
demonstrated in \cite{RT07}, we get the following result:

\begin{corollary}\label{cor:FOP-PHP-ref}
For any linear order on the variables, and for any $m>n$ there are polynomial-size (in $n$) \fop~refutations of the $m$ to $n$ pigeonhole principle {\rm FPHP}$^m_n$ (over large enough fields).
\end{corollary}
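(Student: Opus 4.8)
The statement is a corollary, obtained by combining the polynomial simulation of \RZ0 by \fop{} from Theorem \ref{thm:FOP_sim_RZ0} with the short \RZ0 refutations of the pigeonhole principle established in \cite{RT07}. The plan is as follows. First, recall that \cite{RT07} shows that \RZ0 (indeed already \RL0 restricted to the relevant lines) admits polynomial-size refutations of the standard CNF encoding of the pigeonhole principle; adjoining the Functional clauses only supplies more axioms, so this yields polynomial-size \RZ0 refutations of the CNF encoding of $\textrm{FPHP}^m_n$ as well. Second, apply Theorem \ref{thm:FOP_sim_RZ0}: for any fixed linear order on the variables and over any sufficiently large field, \fop{} polynomially simulates \RZ0, and moreover the simulating proof may be taken to consist of depth-$3$ ordered formulas. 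The only remaining task is to identify the collection of initial polynomials on which the resulting \fop{} refutation operates with the collection $\neg\textrm{FPHP}^m_n$ displayed in (\ref{eq037}).

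For this identification, observe that under the polynomial translation of clauses (Definition \ref{def-poly-trnas-of-R(lin)-lines}) the Pigeon clause $\bigvee_{j\in[n]} x_{i,j}$ becomes $\prod_{j\in[n]}(1-x_{i,j})$ (up to the scalar $(-1)^n$, which is harmless since PC is closed under scalar multiplication), the Functional clause $\neg x_{i,k}\vee\neg x_{i,\ell}$ becomes $x_{i,k}\cd x_{i,\ell}$, and the Hole clause $\neg x_{i,k}\vee\neg x_{j,k}$ becomes $x_{i,k}\cd x_{j,k}$; these are exactly the polynomials of (\ref{eq037}). It suffices to invoke the PC version of Proposition \ref{cor-PCR-RZ0-lines-translation} (legitimate by the Note following it), so that no auxiliary $\bar x$-variables or $\tau$-substitutions enter and the PC proof produced really starts from $\neg\textrm{FPHP}^m_n$; Proposition \ref{prop:implicit_in} then turns it multilinear with every line a multilinearization of an \RCD0{c',d'}-line of width polynomial in $n$, and Lemma \ref{lem:ordered_formula_for_RCD-line} writes each such line as an ordered $\Sigma\Pi\Sigma$ formula of size polynomial in $n$ over any field of size exceeding that polynomial in $n$ (e.g.\ over $\Q$, or a finite field of polynomial size) --- which is precisely what ``large enough field'' means here. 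By the definition of \fop{}, the resulting sequence of ordered formulas is an \fop{} refutation of $\neg\textrm{FPHP}^m_n$ of polynomial size.

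The one point that needs a little care --- and essentially the only place the argument is not purely mechanical --- is uniformity in the number of pigeons $m$: if $m$ is super-polynomial in $n$ then $\neg\textrm{FPHP}^m_n$ is itself too large for a size-$\mathrm{poly}(n)$ object to even mention, so ``polynomial in $n$'' is read as follows. Fix any $n+1$ of the $m$ pigeons; the Pigeon, Functional and Hole axioms involving only these pigeons are among the given ones and constitute exactly $\neg\textrm{FPHP}^{n+1}_n$, and the argument above applied to this sub-instance produces an \fop{} refutation of size polynomial in $n$ --- which, using only a subset of the axioms, is in particular a refutation of $\neg\textrm{FPHP}^m_n$. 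The remaining bookkeeping --- that the constants $c,d$ coming out of Propositions \ref{cor-PCR-RZ0-lines-translation} and \ref{prop:implicit_in} are absolute and that the width bound fed into Lemma \ref{lem:ordered_formula_for_RCD-line} stays polynomial in $n$ --- is inherited verbatim from the cited statements, so no genuinely new estimate is required.
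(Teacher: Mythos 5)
Your proposal is correct and follows essentially the same route as the paper, which derives the corollary in one line by combining the polynomial simulation of \RZ0 by \fop\ (Theorem \ref{thm:FOP_sim_RZ0}) with the \RZ0 upper bounds for the pigeonhole principle from \cite{RT07}; your extra details (matching the clause translations to (\ref{eq037}), the field-size bookkeeping, and restricting to $n+1$ pigeons when $m$ is large) are sound elaborations of what the paper leaves implicit.
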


$ \neg \textrm{FPHP}^m_n$ is a direct translation of the CNF formula for the $ m $ to $ n $ functional pigeonhole principle. Thus, by known lower bounds, \fop~is strictly stronger than resolution and is separated from bounded depth Frege.
On the other hand, Razborov \cite{Razb98} and subsequently Impagliazzo et al.~\cite{IPS99} gave exponential lower bounds on
the size of PC-refutations of a different \emph{low degree}
version of the Functional Pigeonhole Principle.
In this low degree version the Pigeons polynomials in (\ref{eq037}) are replaced by $\, 1- (x_{i, 1}+\ldots+
x_{i,n})$, for all $i\in[m]$.
It is not hard to show (via reasoning inside \RZ0) that \fop~admits polynomial-size refutations also for this low-degree version of the functional pigeonhole principle. This shows that \fop~is strictly stronger than PC (under the size measures as defined for \fop~and PC).\QuadSpace

The Tseitin graph tautologies were proved to be hard tautologies for several propositional proof system. We refer the reader to \cite{RT07}, Definition 6.5, for the precise definition of the (generalized, mod $ p $) Tseitin tautologies. We have the following:

\begin{corollary}\label{cor:fop_ub_Tseitin}
Let $G$ be an $r$-regular graph with $n$ vertices, where $r$ is a constant,
and fix some modulus $p$.
Then, for any linear order on the variables there are polynomial-size (in $n$) \fop~refutations of the corresponding
Tseitin {\rm mod} $p$ formulas \Tse0 (over large enough fields).
\end{corollary}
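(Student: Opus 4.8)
The plan is to mirror exactly the proof of Corollary~\ref{cor:FOP-PHP-ref}: combine the polynomial simulation of \RZ0 by \fop~(Theorem~\ref{thm:FOP_sim_RZ0}) with the polynomial-size upper bounds on \RZ0 refutations of the Tseitin mod $p$ formulas proved in \cite{RT07}. Concretely, I would first recall from \cite{RT07} that for any $r$-regular graph $G$ on $n$ vertices with $r$ a constant, and any modulus $p$, the generalized mod $p$ Tseitin formulas \Tse0 admit \RZ0 refutations of size polynomial in $n$ (this is one of the principal upper bounds of \cite{RT07}, the same source from which the \RZ0 upper bound for $\neg\textrm{FPHP}^m_n$ used in Corollary~\ref{cor:FOP-PHP-ref} is taken; since $r$ is constant the CNF encoding of \Tse0 has size polynomial in $n$, so the refutation stays polynomial).

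Second, I would invoke Theorem~\ref{thm:FOP_sim_RZ0}: over large enough fields \fop~polynomially simulates \RZ0, and the simulating proof may even be taken to consist of depth-$3$ ($\Sigma\Pi\Sigma$) ordered formulas. Composing the two facts --- a polynomial-size \RZ0 refutation of \Tse0, followed by its polynomial-size \fop~simulation --- yields a polynomial-size \fop~refutation of the polynomial translation of \Tse0, which is precisely what the corollary asserts (recall that \fop, like PC, establishes polynomial translations of Boolean contradictions). The linear order on the variables stays arbitrary throughout, as already guaranteed by Theorem~\ref{thm:FOP_sim_RZ0}.

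The point needing care --- the main obstacle --- is the interface between the two cited results rather than any new argument: one must check that the particular polynomial translation of \Tse0 on which \fop~operates coincides, up to the standard CNF-to-\RZ0 translation of \cite{RT07}, with the formula refuted by the \RZ0 proof, and that the ``large enough field'' hypothesis of Theorem~\ref{thm:FOP_sim_RZ0} is met. The latter is harmless: by Lemma~\ref{lem:ordered_formula_for_RCD-line} the field size required for the interpolation step is polynomial in the width times the number of variables, hence polynomial in $n$, so a single field such as $\Q$ (or any sufficiently large finite field) works uniformly for the whole family; and since \RZ0 reasons with integer-coefficient linear equations, no constraint on the characteristic is forced, so working over $\Q$ is legitimate. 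Apart from this bookkeeping, the corollary is an immediate consequence of Theorem~\ref{thm:FOP_sim_RZ0} and the upper bounds of \cite{RT07}.
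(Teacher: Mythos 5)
Your proposal is correct and follows exactly the paper's route: the corollary is obtained by composing the polynomial-size \RZ0 refutations of the Tseitin mod $p$ formulas from \cite{RT07} with the simulation of \RZ0 by \fop\ in Theorem \ref{thm:FOP_sim_RZ0}, which is all the paper itself does. Your extra bookkeeping on the translation and the field-size requirement is consistent with the paper's ``large enough fields'' hypothesis and adds nothing that conflicts with it.
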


This stems from the \RZ0 polynomial-size refutations of the Tseitin {\rm mod} $p$
formulas demonstrated in \cite{RT07}.
From the known exponential lower bounds on PCR (and PC and resolution) refutation size of
Tseitin mod $p\,$ tautologies (when the underlying graphs are appropriately expanding; cf. \cite{BGIP01,BSI99,ABSRW00}), and for the polynomial simulation of PCR by \fop, we conclude  that \fop~is strictly stronger than PCR.

\section{Useful lower bounds on product of ordered polynomials}\label{sec:lb_on_product_formulas}
In this section we show that the ordered formula size of certain polynomials can increase exponentially when multiplying the polynomials together. We use this to suggest an approach
for lower bounding the size of \fop~proofs in Section \ref{sec:lower_bound_approach}.
We use a method of partial derivatives matrix introduced by Nisan to obtain exponential-size lower bounds on noncommutative formulas in \cite{Nis91}.

\begin{proposition}\label{prop:lower_bound_ordered_formula}
Let $ \F $  be a field, $X:= \set{x_1,\ldots,x_{n}} $ be a set of variables and $ \preceq $ be some linear order on $ X $.
Then, for any natural numbers $ m\le n $ and $ d \le \lfloor n/m \rfloor $, there exist polynomials $ f_1,\ldots,f_d $ from $ \F[x_1,\ldots,x_n] $, such that every $ f_i $ can be computed by an ordered formula of size $ O(m)$ and every ordered formula computing
$ \prod_{i=1}^d f_i $ has size $ 2^{\Omega(d)}$.
\end{proposition}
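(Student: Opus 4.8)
The plan is to use Nisan's rank characterization: for a noncommutative (in particular, ordered) polynomial $h$ that is homogeneous of degree $k$, the minimal size of an ordered formula computing $h$ is polynomially related to $\sum_{t=0}^{k} \mathrm{rank}(M_t(h))$, where $M_t(h)$ is the matrix whose rows are indexed by degree-$t$ ordered monomials (in the first $t$ ``slots''), columns by degree-$(k-t)$ ordered monomials (in the last $k-t$ slots), and the $(\mathcal{M},\mathcal{M}')$-entry is the coefficient of the concatenated monomial $\mathcal{M}\mathcal{M}'$ in $h$. Since the order of products in every monomial of an ordered polynomial is forced, this matrix is well-defined exactly as in the noncommutative case, and Nisan's lower-bound argument goes through verbatim for ordered formulas. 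So the whole task reduces to exhibiting $f_1,\dots,f_d$, each with an $O(m)$-size ordered formula, such that some partial-derivative matrix of $\prod_{i=1}^d f_i$ has rank $2^{\Omega(d)}$.

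First I would partition the variable set, respecting $\preceq$, into $d$ consecutive blocks $B_1 \prec B_2 \prec \cdots \prec B_d$, each of size $m$ (this is where $d \le \lfloor n/m\rfloor$ is used; we simply discard unused variables). Write $B_i = \{y^{(i)}_1 \prec \cdots \prec y^{(i)}_m\}$. For each $i$ I would take $f_i$ to be a ``two-term'' ordered polynomial on block $B_i$ — concretely $f_i := u_i + v_i$ where $u_i, v_i$ are two fixed distinct ordered degree-$(m/2)$ (say) monomials in the variables of $B_i$, or even more simply $f_i := y^{(i)}_1 \cdots y^{(i)}_{m/2} + y^{(i)}_{m/2+1}\cdots y^{(i)}_m$; each such $f_i$ visibly has an ordered formula of size $O(m)$. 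Because the blocks are disjoint and consecutive in $\preceq$, every monomial of the product $F := \prod_{i=1}^d f_i$ decomposes uniquely as a concatenation (in increasing block order) of one ``choice'' from each $f_i$, and $F$ is homogeneous of degree $k = dm/2$. Cutting the partial-derivative matrix of $F$ exactly at a block boundary — say between $B_{d/2}$ and $B_{d/2+1}$ — the rows are indexed by the $2^{d/2}$ choices in $f_1,\dots,f_{d/2}$ and the columns by the $2^{d/2}$ choices in $f_{d/2+1},\dots,f_d$, and the corresponding submatrix of $M_{k/2}(F)$ is (a permutation of) the identity of size $2^{d/2}$, since each row-choice pairs with exactly one column-choice to give a monomial actually appearing in $F$, with coefficient $1$. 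Hence $\mathrm{rank}(M_{k/2}(F)) \ge 2^{d/2} = 2^{\Omega(d)}$, and Nisan's bound gives the ordered-formula lower bound $2^{\Omega(d)}$ for $F$.

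A small technical point I would address: Nisan's theorem as usually stated applies to homogeneous polynomials, and an arbitrary ordered formula computing $F$ need not be homogeneous. The standard fix is homogenization — from an ordered formula of size $s$ computing $F$ one extracts an ordered formula of size $O(s \cdot k^2)$ (or $O(s\cdot k)$) computing the homogeneous component of degree $k$, which equals $F$ since $F$ is already homogeneous; one checks homogenization preserves orderedness because it only splits gates by degree without reordering products. Then the rank bound on $M_{k/2}(F)$ forces $s \cdot \mathrm{poly}(k) \ge 2^{\Omega(d)}$, and since $k = dm/2 \le n^2$, say, the $\mathrm{poly}(k)$ factor is absorbed and we still get $s = 2^{\Omega(d)}$.

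The main obstacle is not any single hard estimate but making the rank lower bound airtight for \emph{ordered} rather than general noncommutative polynomials: one must verify that every ordered monomial of $F$ is correctly ``seen'' by the block-aligned cut of the partial-derivative matrix — i.e. that an ordered monomial's prefix of the right degree lies entirely in the first half of blocks and its suffix entirely in the second half, with no cross-block interleaving possible. This is exactly guaranteed by choosing the cut at a block boundary and by $\preceq$ ordering blocks consecutively, so the obstacle is really bookkeeping rather than a genuine difficulty; the rest (constructing $f_i$, the identity-submatrix observation, invoking Nisan and homogenization) is routine.
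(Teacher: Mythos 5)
There is a genuine gap, and it is in the heart of the construction, not in the bookkeeping. You place the variables of each $f_i$ in a block $B_i$ that is \emph{consecutive} in $\preceq$, with $B_1\prec B_2\prec\cdots\prec B_d$ and $f_i=u_i+v_i$ a sum of two monomials inside $B_i$. But then $\prod_{i=1}^d f_i$ has a \emph{small} ordered formula: simply multiply the blocks in order, $f_1\cd(f_2\cd(\cdots\cd f_d))$. Every subformula of this formula computes an ordered polynomial (a monomial from $B_i$ concatenated with monomials from later blocks respects $\preceq$ because the blocks are consecutive), so it is an ordered formula of size $O(dm)$, and no $2^{\Omega(d)}$ lower bound can hold for your $F$. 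Correspondingly, your key claim that the block-boundary cut of $M_{k/2}(F)$ contains a $2^{d/2}\times 2^{d/2}$ permutation submatrix is false: for your $F$ every prefix choice (one of $u_i,v_i$ for each of the first $d/2$ blocks) concatenates with \emph{every} suffix choice to give a monomial of $F$, so the submatrix you describe is the all-ones matrix, of rank $1$. Since $F$ factors into ordered polynomials on consecutive disjoint variable sets, the choices made to the left of the cut carry no information about the choices to the right, which is exactly why all the matrices $M_t(\llbracket F\rrbracket)$ have small rank.

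The paper's construction avoids this by making each $f_i$ \emph{straddle} the order rather than occupy a contiguous block: it reduces to $m=2$ (an $O(2)$-size formula is in particular $O(m)$-size) and takes $f_i:=y_i+z_i$ where $y_1\preceq\cdots\preceq y_d\preceq z_1\preceq\cdots\preceq z_d$, i.e., $y_i=x_i$ and $z_i=x_{d+i}$. Then an ordered monomial of $\prod_i(y_i+z_i)$ is $\prod_{i\in S}y_i\cd\prod_{j\notin S}z_j$ for $S\subseteq[d]$, so for every cut degree $k$ the rows indexed by ordered degree-$k$ monomials in the $y$'s and the columns indexed by ordered degree-$(d-k)$ monomials in the $z$'s form a permutation matrix of size $\binom{d}{k}$ (the $y$-set chosen on the left forces the complementary $z$-set on the right), and summing over $k$ gives the bound $2^d$ via Nisan's theorem. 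Your framing of the reduction to Nisan's rank bound is fine (indeed an ordered formula is just a noncommutative formula computing $\llbracket F\rrbracket$, so Nisan's theorem applies directly and the homogenization detour you worry about is unnecessary), but to repair the proof you must redesign the $f_i$'s so that, at each cut, the left part of a monomial determines the right part, as in the paired $y_i,z_i$ construction.
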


\begin{proof}
First, note that it is sufficient to prove the proposition for $ m=2 $ and any $ d\le \lfloor n/2\rfloor $.
(Because, assume that the proposition holds for $ m=2 $ and any $ d\le \lfloor n/2\rfloor $.
And let $ m',d' $ be such that  $ m'\le n $ and $ d'\le \lfloor n/m' \rfloor $.
By assumption, for $ m=2 $ and $ d'\le \lfloor n/m' \rfloor \le \lfloor n/2\rfloor$,
there are $ f_1,\ldots,f_{d'} $ from $ \F[x_1,\ldots,x_n] $ that can be computed by ordered formulas of size constant
[that is, $ O(2) $, and hence of size $ O(m')$], and such that every ordered formula computing $ \prod_{i=1}^{d'} f_i $ has size $ 2^{\Omega(d')} $.)

Thus, let $ m=2 $ and $ d \le\lfloor n/2 \rfloor $.
Assume without loss of generality that the linear order $ \preceq $ is such that $ x_1\preceq x_2\preceq\ldots\preceq x_n $. Abbreviate the variables $ x_1,\ldots,x_d $\, as\, $ y_1,\ldots,y_d $, respectively, and abbreviate the variables $ x_{d+1},\ldots,x_{2d} $\, as\, $ z_1,\ldots,z_d $, respectively (that is, the $ y_i $'s and $ z_i $'s are just different notations for their corresponding $ x_i $ variables, introduced to simplify the writing). We thus have $ y_1\preceq\ldots\preceq y_d \preceq z_1\preceq\ldots\preceq z_d $.

For every $ i =1,\ldots,d $, define the following polynomial:
\[
    f_i := (y_i+z_i)\,.
\]
Define
\[
    \mathsf{HARD}_d := \prod_{i=1}^d f_i = \prod_{i=1}^d (y_i+z_i)\,.
\]
We show that every ordered formula of $\mathsf{HARD}_d$ (under $ \preceq $)
is of size at least $  2^{\Omega(d)} $.
Note that $\mathsf{HARD}_d$ is a homogenous and multilinear polynomial of degree $ d $.

Recall that $\llbracket \mathsf{HARD}_d \rrbracket $ is the \emph{noncommutative} polynomial obtained from $\mathsf{HARD}_d$ by ordering the products in every monomial in accordance to the linear order $ \preceq $.
By definition of ordered formulas, it suffices to lower bound the size of noncommutative formulas
computing $\llbracket \mathsf{HARD}_d \rrbracket $.
For this purpose we use a rank argument introduced in \cite{Nis91}.
Nisan defined the matrix $ M_k(f) $ associated with a noncommutative polynomial $ f $ as follows:
\begin{definition}[\cite{Nis91}]
Let $ f \in \F\langle x_1,\ldots,x_n\rangle $ be a noncommutative homogenous polynomial of degree $ d $.
For every $ 0\le k\le d $, we define $ M_k(f) $ to be a matrix of dimension $ n^{k}\times n^{d-k} $ as follows:
(i) there is a row corresponding to every degree $ k $ noncommutative monomial over the variables $\{x_1,\ldots,x_n\} $, and a column corresponding to every degree $ d-k $ noncommutative monomial over the variables $\{ x_1,\ldots,x_n\} $;
(ii) for every degree $ k $ monomial $ \mathscr M $ and every degree $ d-k $ monomial $ \mathscr N $,
the entry in $ M_k(f) $ on the row corresponding to $ \mathscr M $ and column corresponding to $ \mathscr N $ is the coefficient of the degree $ d $ monomial $ \mathscr M\cd \mathscr N $ in $ f $.
\end{definition}

\begin{theorem}[\cite{Nis91} Theorem 1]\label{thm:Nisan_lower_bound}
Let $ f $ be a degree $ r$ homogenous noncommutative polynomial. Then, every noncommutative formula computing $ f $ has size at least $\sum_{k=0}^r{\rm rank}\left(M_k(f)\right)\,$.
\end{theorem}

In view of Theorem \ref{thm:Nisan_lower_bound}, it suffices to prove the following claim:
\begin{claim} For any $ 0\le k \le d $ we have
$  \,{\rm rank}(M_{k}(\llbracket \mathsf{HARD}_d\rrbracket)) \ge {d\choose k}\,.
 $
\end{claim}
\begin{proofclaim}
Consider the matrix $ M_k(\llbracket \mathsf{HARD}_d\rrbracket) $.
Let  $ \mathbf A_k $ be the matrix obtained from $ M_k(\llbracket \mathsf{HARD}_d\rrbracket) $ by removing  all rows and columns excluding the following rows and columns:
\begin{enumerate}
\item the rows corresponding to degree $ k $ multilinear monomials containing only $ y_i $ variables, such that the order of products in the monomial respects $ \preceq $\,;
\item the columns corresponding to degree $ d-k $ multilinear monomials containing only $ z_i $ variables, such that the order of products in the monomial respects $ \preceq $.
\end{enumerate}

Consider a degree $ k $ monomial $ \mathscr M = y_{i_1}\cdots y_{i_k} $, where $ i_1 <\ldots <i_k $. Let $ J = [d]\sm \set{i_1,\ldots,i_k} $. We can denote the elements of $ J $ as $ \set{j_1,\ldots,j_{d-k}} $, where $ j_1<\ldots<j_{d-k} $. Observe that the monomial  $\mathscr M $  has on its corresponding row in $ \mathbf A_k $ only zeros, except for a single $ 1 $ in the position (that is, column) corresponding to the degree $ d-k $ monomial $ \mathscr N = z_{j_1}\cdots z_{j_{d-k}} $. (Indeed, note that the coefficient of the degree $ d $ monomial $ \mathscr M\cd\mathscr N $ in $\llbracket \mathsf{HARD}_d\rrbracket$ is $ 1 $.)

Note that $ \mathbf A_k $ contains $ d \choose k $ rows corresponding to all possible degree $ k $ multilinear monomials $ \mathscr M $ in the $ \bar y $ variables whose product order respect $ \preceq $.
Similarly, $ \mathbf A_k $ contains $ d \choose k $ columns corresponding to all possible degree $ d-k $ multilinear monomials $ \mathscr N $ in the $ \bar z $ variables whose product order respect $ \preceq $.
By the previous paragraph: (i) each of the rows in $ \mathbf A_k $ has only one nonzero entry;
and (ii) for every row, the nonzero entry is in a \emph{different} column from those of other rows.
We then conclude that $ \mathbf A_k $ is a permutation matrix.
Therefore:
\[
{\rm rank}(\mathbf A_k) = {d \choose k}\,.
\]
The claim follows since clearly 
$ {\rm rank}(\mathbf  A_k) \le {\rm rank}(M_{k}\left(\llbracket\mathsf{HARD}_d\rrbracket)\right)\,$.
\end{proofclaim}

By the claim and by Theorem \ref{thm:Nisan_lower_bound}, we conclude that the ordered formula size of $\mathsf{HARD}_d$ is at least
\[
    \sum_{k=0}^d{\rm rank}\left(\mathbf A_k\right) =
        \sum_{k=0}^d{d \choose k} = 2^d\,.
\]
\end{proof}
\vspace{10pt}

\subsection{A lower bound approach}\label{sec:lower_bound_approach}
Here we discuss a simple possible approach intended to establish lower bounds on \fop\ proofs, roughly, by reducing \fop\ lower bounds to PC degree lower bounds and using the bound in
Section \ref{sec:lb_on_product_formulas} (Proposition \ref{prop:lower_bound_ordered_formula}).

Let $Q_1(\bar x),\ldots,Q_m(\bar x)$ be a collection of constant degree (independent of $ n $) polynomials from $ \F[x_1,\ldots,x_n] $ with no common solutions in $ \F $,
such that $ m $ is polynomial in $ n $. Let $ f_1(\bar y),\ldots,f_n(\bar y) $ be $ m $ homogenous polynomials of the same degree from $ \F[y_1,\ldots,y_\ell] $, such that the ordered formula size of each $ f_i(\bar y) $ (for some fixed linear order on the variables) is polynomial in $ n $ and such that the $ f_i(\bar y)$'s  do not have common variables (that is, each $ f_i(\bar y) $ is over disjoint set of variables from $ \bar y $).
Suppose that for any distinct $ i_1,\ldots,i_d \in [n]$  the ordered formula size of $ \prod_{j}^d f_{i_j}(\bar y) $ is $ 2^{\Omega(d)} $.

\begin{note}
By the proof of Proposition \ref{prop:lower_bound_ordered_formula}, the conditions above are easy to achieve. Indeed, the $ f_i(y_i,z_i) $'s defined in the proof of Proposition \ref{prop:lower_bound_ordered_formula} have these properties: homogeneity, same degrees for all $ f_i $'s and disjointness of variables, and an exponential increase in ordered formula size for products of the $ f_i $'s.
\end{note}

Consider the polynomials $ Q_1(\bar x),\ldots,Q_m(\bar x) $ after applying the substitution:
\begin{equation}\label{eq:substitution_itself}
x_i \mapsto f_i(\bar y)\,.
\end{equation}

In other words, consider
\begin{equation}\label{eq:substitution_hard_instance}
Q_1(f_1(\bar y),\ldots,f_n(\bar y)),\ldots,Q_m(f_1(\bar y),\ldots,f_n(\bar y))\,.
\end{equation}

Note that (\ref{eq:substitution_hard_instance}) is also unsatisfiable over $ \F $.
We suggest to lower bound the \fop\ refutation size of (\ref{eq:substitution_hard_instance}),
based on the following simple idea:
it is known that some families of unsatisfiable collections of polynomials require linear $\Omega(n) $ \emph{degree} PC refutations (where $ n $ is
the number of variables). In other words, every refutation of these polynomials must contain some polynomial of linear degree. By definition, also every \fop~refutation of
these polynomials must contain some polynomial of linear degree.

Thus, assume that the initial polynomials $Q=\set{ Q_1(\bar x),\ldots,Q_m(\bar x)}$ in the $ x_1,\ldots,x_n $ variables, require linear degree refutations---in fact, an $ \omega(\log n)$ degree lower bound would suffice.
Thus, every PC refutation contains some polynomial $ h $ of degree $\omega(\log n) $.
Then, we might expect that every PC refutation of its substitution instance (\ref{eq:substitution_hard_instance}) contains a polynomial $ g\in \F[\bar y] $ which is a substitution instance (under the substitution (\ref{eq:substitution_itself})) of an $ \omega(\log n) $-degree polynomial in the $\bar x $ variables. This, in turn, leads (under some conditions; see below for an example of such conditions) to a lower bound on \fop\ refutations.
An example of sufficient conditions for super-polynomial \fop~lower bounds, is as follows: every PC refutation of (\ref{eq:substitution_hard_instance}) contains a polynomial $ g $ so that one of $ g $'s homogenous components is a substitution instance (under the substitution (\ref{eq:substitution_itself})) of a degree $ \omega(\log n) $ multilinear polynomial from $ \F[x_1,\ldots,x_n] $.
We formalize this argument:
\vspace{10pt}

\begin{example} \textbf{conditional \fop\ size lower bounds.}
\label{prop:conditional_lb_fop}
(Assume the above notations and conditions.)
\ind\textbf{If:} every PC refutation of (\ref{eq:substitution_hard_instance}) that has polynomial in $ n $ number of proof-lines contains a polynomial $ g \in \F[y_1,\ldots,y_\ell]  $ such that for some $ t \le \deg(g)$, the $ t $-th homogenous component $ g^{(t)} $ of $ g $ (that is, the sum of all monomials of total degree $ t $ in $ g $) is a \emph{substitution instance} (under the substitution (\ref{eq:substitution_itself})) of a degree $ \omega(\log n) $ multilinear polynomial from $ \F[x_1,\ldots,x_n] $;

\ind\textbf{Then:} every \fop~refutation of (\ref{eq:substitution_hard_instance}) is of super-polynomial size (in $ n $).
\end{example}

\begin{proof_of_example}
It suffices to show that any ordered formula of $ g $ is of super-polynomial size in $ n $. Note that breaking an algebraic formula into its corresponding homogenous components---according to the standard known procedure (cf. \cite{Raz08}, proof of Proposition 2.3)---is also applicable to ordered formulas: in other words, if $ g $ has a polynomial-size ordered formula then each of $ g $'s homogenous components has a polynomial-size ordered formula as well.\footnote{Assume we have an ordered formula $ \Phi $ and we want to construct the ordered formula $ \Phi^{(k)}$ that computes the $ k $-th degree homogenous polynomial of (the polynomial computed by) $ \Phi $. We work by induction on the structure of the formula $\Phi $: a \emph{plus gate} in the original formula $\Phi $ turns into a plus gate $ u $ with two children, such that if each of the two subformulas rooted at the two children are ordered formulas then the subformula rooted at $ u $ is also an ordered formula. A \emph{product gate} turns into the sum of products of pairs of ordered subformulas, such that if the original product gate respects the linear order then also each of the products in the sum respects the linear order.
(For more details on the construction of [non-ordered] homogenous formulas from a given algebraic formula we refer the reader to \cite{Raz08}.)}
Thus, it suffices to show that every ordered formula of $ g^{(t)} $ is of size super-polynomial in $ n $ (because then $ g $ itself has
super-polynomial size).

By assumption, $ g^{(t)} $ is a substitution instance of some degree $ \omega(\log n) $ multilinear polynomial $ h \in \F[x_1,\ldots,x_n] $.
Since $ g^{(t)} $ is homogenous and all the $ f_i(\bar y) $'s have the same degree and are homogenous, $ h $ must be homogenous too.
Since $ h $ is multilinear we can write $ h = \sum_{j\in J} b_j \mathcal M_j $, where the $ \mathcal M_j $'s are multilinear monomials in the $ \bar x $ variables and $ b_j $ are coefficients from $ \F $.
Now, consider some single monomial $ \mathcal M $ from  $ \sum_{j\in J} b_j \mathcal M_j $.
By multilinearity and homogeneity of $ h $ every other monomial $ \mathcal M'\neq \mathcal M $ in $ h $ must contain an $ x_i $ variable that does not appear in $ \mathcal M $. We can assign $ 0 $ to such $ x_i $.
Doing this for every monomial $ \mathcal M'\neq \mathcal M $, we get that $ h $ (under this partial assignment to the $ \bar x $ variables) is equal to $ b \mathcal M $, for some coefficient $ b \in \F $.
In a similar manner, by disjointness of the variables in the $ f_i(\bar y) $'s, there exists a partial assignment $\rho:\bar y \to \set{0} $, such that $ g^{(t)}\rst\rho $ is just a substitution instance (under the substitution (\ref{eq:substitution_itself})) of a single
degree $ \omega(\log n) $ multilinear \emph{monomial} in the $\bar x $ variables.
This means that $ g^{(t)}\rst\rho $ is the product of $\omega(\log n) $ distinct $ f_i(\bar y) $'s (multiplied by $ b $).
Therefore, by assumption on the $ f_i(\bar y) $'s every ordered formula of $ g^{(t)} $ is of size exponential in $ 2^{\omega(\log n)} $, which is super-polynomial in $ n $.
\end{proof_of_example}

\section*{Acknowledgments}
I wish to thank Emil je\v{r}abek, Sebastian M\"{u}ller,  Pavel Pudl{\'a}k and Neil Thapen for helpful discussions on issues related to this paper. I also wish to thank Ran Raz for suggesting this research direction, and Jan Kraj\'{i}\v{c}ek for inviting me to give a talk at TAMC 2010 on
this subject.

\bibliographystyle{alpha} 

\newcommand{\etalchar}[1]{$^{#1}$}
\def\cprime{$'$} \def\cprime{$'$} \def\cprime{$'$} \def\cprime{$'$}

\HalfSpace
\end{document}